\newcommand{\off}{\textsc{OFF}\xspace}
\newcommand{\oni}{\textsc{ON-I}\xspace}
\newcommand{\onii}{\textsc{ON-II}\xspace}
\newcommand{\on}{\textsc{ON}\xspace}
\newcommand{\coston}{\textsc{Cost}_{\on}}
\newcommand{\costoff}{\textsc{Cost}_{\off}}
\newcommand{\eps}{\varepsilon}
\newcommand{\calO}{\mathcal{O}}
\newcommand{\bal}{\emph{b}}
\newcommand{\slog}{ \lceil \log C \rceil}
\newcommand{\unirej}{unidirectional transaction stream with rejection}
\newcommand{\reject}[1]{R#1 + f_2}
\newcommand{\gor}{\frac{\sqrt{5}-1}{2}}
\newcommand{\nebu}{X}
\newcommand{\rtl}[1]{\protect\overscriptleftarrow{#1}}
\newcommand{\ltr}[1]{\protect\overscriptrightarrow{#1}}
\newcommand{\tracker}{F_{\textit{tracker}}}
\newtheorem{observation}{Observation}
\newcommand{\namedtheoremname}{}
\newtheorem{namedtheoreminner}[theorem]{\protect\namedtheoremname}
\newcommand{\nameddefinitionname}{}
\newtheorem{nameddefinitioninner}[definition]{\protect\nameddefinitionname}
\newcommand{\namedlemmaname}{}
\newtheorem{namedlemmainner}[lemma]{\protect\namedlemmaname}
\newcommand{\namedcorollaryname}{}
\newtheorem{namedcorollaryinner}[corollary]{\protect\namedcorollaryname}
\newcommand{\namedpropositionname}{}
\newtheorem{namedpropositioninner}[proposition]{\protect\namedpropositionname}
\newcommand{\namedexamplename}{}
\newtheorem{namedexampleinner}[example]{\protect\namedexamplename}
\newcommand{\namedexercisename}{}
\newtheorem{namedexerciseinner}[exercise]{\protect\namedexercisename}
\newcommand{\namedobservationname}{}
\newtheorem{namedobservationinner}[observation]{\protect\namedobservationname}
\newcommand{\namedconjecturename}{}
\newtheorem{namedconjectureinner}[conjecture]{\protect\namedconjecturename}
\begin{document}

\title{Online Admission Control and Rebalancing\\in Payment Channel Networks}
\author{Mahsa Bastankhah\inst{1} \and
Krishnendu Chatterjee\inst{2} \and
Mohammad Ali Maddah-Ali\inst{1} \and
Stefan Schmid\inst{3} \and
Jakub Svoboda\inst{2} \and
Michelle Yeo\inst{2}}
\authorrunning{M. Bastankhah et al.}
%
\institute{Sharif University of Technology,\\ \email{mahsa.bastankhah@ee.sharif.edu}, \email{Maddah\_ali@sharif.edu} \and
Institute of Science and Technology Austria, \\
\email{\{Krishnendu.Chatterjee , jakub.svoboda ,  michelle.yeo\}@ist.ac.at} \and
TU Berlin \\\email{stefan.schmid@tu-berlin.de}}
\maketitle

\begin{abstract}
Payment channel networks (PCNs) are a promising technology to improve the scalability of cryptocurrencies. 
PCNs, however, face the challenge that the frequent usage of certain routes may deplete channels in one direction, and hence prevent further transactions. 
In order to reap the full potential of PCNs, recharging and rebalancing mechanisms are required to provision channels, as well as an admission control logic to decide which transactions to reject in case capacity is insufficient. 
This paper presents a formal model of this optimisation problem. 
In particular, we consider an online algorithms perspective, where transactions arrive over time in an unpredictable manner.  
Our main contributions are competitive online algorithms which come with provable guarantees over time. 
We empirically evaluate our algorithms on randomly generated transactions to compare the average performance of our algorithms to our theoretical bounds.
We also show how this model and approach differs from related problems in classic communication networks.
\end{abstract}

\section{Introduction}

Blockchain consensus protocols are notoriously inefficient: for instance, Bitcoin can only support $7$ transactions per second on average which makes it unrealistic to use in everyday situations. 
Payment channel networks like Bitcoin's Lightning Network~\cite{poon2015lightning} and Ethereum's Raiden~\cite{raiden} have been proposed as scalability solutions to blockchains. 
Instead of sending transactions to the blockchain and waiting for the entire blockchain (which can comprise of millions of users) to achieve consensus, any two users that wish to transact with each other can simply open a payment channel between themselves. 
Opening a payment channel requires an initial funding transaction on the blockchain where both users lock some funds only to use in the channel.
Once a payment channel is opened, the channel acts as a local, two-party ledger: payments between the users of channel simply involve decreasing the balance of the payer by the payment amount, and increasing the balance of the payee correspondingly.
 As these local transactions only involve exchanging signatures between the two users and do not involve the blockchain at all, they can be almost instantaneous.
As long as there is sufficient balance, payments can occur indefinitely between two users, until the users decide to close the channel. 
This would involve going back to the blockchain and takes, in the worst case, a small constant number of transactions (in the Lightning Network this constant is 2). 
Thus, with only a small constant number of on-chain transactions, any two users can potentially make arbitrarily many costless transactions between themselves.

Apart from joining a payment channel network to efficiently transact with other users, an additional financial incentive to joining the network is to profit from forwarding transactions. 
Any two users that are not directly connected to each other with a payment channel can transact with each other in a multi-hop fashion as long as they are connected by a path of payment channels. 
To incentivise these intermediary nodes on the path to forward the payment, the network typically allows these nodes to charge a transaction fee. 
Thus, it is common for users to join the network specifically to play the role of an intermediary node that routes transactions, creating channels and fees optimally and selecting the most profitable transactions to maximise their profit from transaction fees \cite{AvarikiotiH0W20,ersoy2019profit}.

However, greedily accepting and routing incoming transactions could rapidly deplete a user's balance in their channels.
In particular, if certain routes are primarily used in one direction, their channels can get depleted, making it impossible to forward further transactions.
Accounting for this problem can be non-trivial, especially since demand patterns are hard to predict and often also confidential. 

To resolve this issue, PCNs typically support two mechanisms:
\begin{itemize}
  \item \emph{On-chain recharging:}  A user can close and reopen a depleted channel with more funds on-chain. 
    \item \emph{Off-chain rebalancing:} An alternative solution is to extend the lifetime of a depleted channel without involving the blockchain,
    by finding a cycle of payment channels in the network to shift funds from one channel to another.
\end{itemize}

Both cases, however, entail a cost.
Intermediaries need to consider the tradeoff between admitting trasactions and potential costs for recharging and rebalancing.
This decision making process is especially important to big routers which are the primary maintainers of payment channel networks like the Lightning Network. 

In this work, we focus on the problem of admission control, recharging and rebalancing in a single payment channel from the perspective of an intermediary node that seeks to route 
as many transactions as possible with minimal costs.
Specifically, we address the following research question:\\

\textit{Can we design efficient online algorithms for deciding when to accept/reject transactions, and when to recharge or rebalance in a single payment channel?}\\

We seek to address this problem with as few restrictions on the actions of the users as possible in order to ensure that our work is realistic. 
Thus, we assume a fixed payment channel network topology with some recharging and rebalancing costs, and a global fee function that is linear in the transaction size. 
We also assume users incur a rejection cost in the form of opportunity cost when they reject to route a transaction.

We are interested in robust solutions which do not depend on any knowledge or assumptions on the demand. Accordingly, we assume that transactions can arrive in an arbitrary order at a channel, and aim to design online algorithms which provide worst-case guarantees. We are in the realm of competitive analysis, and assume that an adversary with knowledge of our algorithms  chooses the most pessimal online transaction sequence. Our objective is to optimise the \emph{competitive ratio}~\cite{borodin2005online}: we compare the 
performance of our online algorithms (to which the transaction sequence is revealed over time) with the optimal offline algorithm that has access 
to the entire transaction sequence in advance.

\subsection{Our contributions}

We initiate the study of a fundamental resource allocation problem in payment channel networks, from an online algorithms perspective.  
Our main result is a competitive online algorithm to admit transaction streams arriving at both sides of a payment channel, and also to recharge and rebalance the channel, in order to maximise the throughput over the channel while accounting for costs. 
In particular, our algorithm achieves a competitive ratio of $7 + 2 \slog$ where $C +1$ is the length of the rebalancing cycle used to replenish the funds on the channel off-chain.
We also provide lower bounds on the amount of funds needed in a channel in order to ensure our algorithm is $c$-competitive for $c < \frac{\log C}{\log \log C}$.

In order to prove our main theorem, we decompose the problem into two simpler but decreasingly restrictive sub problems that may also be of independent interest: 

\begin{enumerate}
  \item \emph{Sub problem 1:}  The first and most restrictive sub problem considers a transaction stream coming only from one direction across a payment channel, and users do not have the option to reject incoming transactions. 
  We present a $2$-competitive algorithm for this problem, which is optimal in the sense that no deterministic online algorithm can achieve a lower competitive ratio. 
  \item \emph{Sub problem 2:}  As a relaxation, our second sub problem allows users to reject transactions although all transactions are still restricted to come from one direction along a payment channel. 
  We show that our algorithm achieves a competitive ratio of $2 + \gor$ for this sub problem. 
  We stress that our lower bound of $2$ we achieve in sub problem $1$ also holds in this sub problem, hence our competitive ratio of $2 + \gor$ is close to optimal.
\end{enumerate}

All these intermediate results as well as the result for our main problem are summarised in \Cref{tab:summary}. 
The algorithms and analysis designed to address these sub problems are eventually used as building blocks for our main algorithm and main theorem.

We complement our theoretical worst-case analysis by performing an empirical evaluation of the performance of our algorithm on randomly generated transaction sequences. 
To this end, we also present an exact dynamic programming algorithm, allowing us to evaluate the competitive ratio. 
We observe that our algorithms perform much better on average compared to our theoretical worst-case bound.

\begin{table*}[h!]
\centering
\begin{tabular}{||c | c||} 
 \hline
 Sub problem & Competitive ratio\\ [0.5ex] 
 \hline\hline
 Unidirectional stream without rejection & $2$  \\ 
 \hline
 Unidirectional stream with rejection & $2 + \frac{\sqrt{5}-1}{2}$   \\
 \hline
 \textbf{Bidirectional stream } & $\bm{7 + 2\slog}$  \\
 \hline
\end{tabular}
\caption{Summary of the theoretical results in our paper. The first column presents each sub problem we analyse in our paper and the second column shows the competitive ratio achieved by our algorithms for each sub problem}
\label{tab:summary}
\end{table*}

\subsection{Related work}

\paragraph{Maintaining balanced payment channels.} As channel balances are typically private, classic transaction routing protocols on payment channel networks like Flare~\cite{Prihodko2016FlareA}, SilentWhispers~\cite{MalavoltaMKM17} and  SpeedyMurmurs~\cite{RoosMKG18} focus mainly on throughput and ignore the issue of balance depletion. Recently, several works shift the focus on maintaining balanced payment channels for as long as possible while ensuring liveness of the network. 
Revive~\cite{KhalilG17} initiated the study rebalancing strategies, Spider\cite{SivaramanVRNYMF20} uses multi-path routing to ensure high transaction throughput while maintaining balanced payment channels, the Merchant\cite{Engelshoven2021TheMA} utilises fee strategies to incentivise the balanced use of payment channels, and \cite{LiMZ20} uses estimated payment demands along channels to plan the amount of funds to inject into a channel during channel creation, to just give a few examples. Our work focuses on minimising costs incurred in the process of handling transactions across a channel and thus we also indirectly seek to maintain balanced payment channels. Moreover, in contrast to previous works which typically assume some form of offline knowledge of the transaction flow in the network, we provide an algorithm which comes with provable worst-case guarantees.

\paragraph{Off-chain rebalancing.} Off-chain rebalancing has been studied as a cheaper alternative to refunding a channel by closing and reopening it on the blockchain. 
In the Lightning Network, there are already several off-chain rebalancing plugins for c-lightning\footnote{https://github.com/lightningd/plugins/tree/master/rebalance} and lnd\footnote{https://github.com/bitromortac/lndmanage}. 
An automated approach to performing off-chain rebalancing using the imbalance measure as a heuristic has been proposed in \cite{PickhardtN20}. Our work similarly studies when to rebalance payment channels, however we make the decision in tandem with other decisions like accepting or rejecting transactions. Recently, \cite{hidenseek} and~\cite{KhalilG17} propose a global approach to off-chain rebalancing where demand for rebalancing cycles is aggregated across the entire network and translated to an LP which is subsequently solved to obtain an optimal rebalancing solution. These approaches are orthogonal and complementary to ours as our focus concerns decision making in a single payment channel and not the entire network.

\paragraph{Online algorithms for payment channel networks.} Online algorithms for payment channel networks have also been studied in \cite{AvarikiotiBWW2019} and \cite{FazliNS2021}. Avarikioti et al. \cite{AvarikiotiBWW2019} establish impossibility results against certain classes of adversaries, however they only consider a limited problem setting where their algorithms can only accept or reject transactions (with constant rejection cost). Fazli et al. \cite{FazliNS2021} considers the problem of optimally scheduling on-chain recharging given a sequence of transactions. In contrast to previous work, our work considers a more general problem setting where our algorithms can not only accept or reject transactions, but also recharge and rebalance channels off chain. We also extend the cost of rejection to take into account the size of the transaction. 

\paragraph{Relationship to classic communication networks.} Admission control problems such as online call admission \cite{aspnes1997line,lukovszki2015online} are fundamental and have also received much attention in the context of communication networks. 
However, in classic communication networks the available capacity of a link in one direction is independent of the flows travelling in the other direction, and moreover, link capacities are only consumed by the currently allocated flows. 
In contrast, the capacities of links in payment-channel networks are permanently reduced by transactions flowing in one direction, but can be topped up by flows travelling in the other direction. 
The resulting rebalancing opportunity renders the underlying algorithmic problem significantly different.

\subsection{Organisation}
The remainder of this paper is organised as follows. In \Cref{sec:model}, we introduce our model and give a brief overview of the various sub problems we consider. \Cref{sec:tracker} presents some algorithmic building blocks. We describe and analyse our algorithms for the sub problems in \Cref{sec:accept-all} and \Cref{sec:rejection}, and for the general problem in \Cref{sec:two_parties}. We empirically evaluate our online algorithms in \Cref{sec:eval}, before concluding our contribution in \Cref{sec:conclusion}.

\section{Model}\label{sec:model}

We introduce preliminaries, and present our model (and its sub problems).

\subsection{Modelling PCNs}

\paragraph{Payment channels.}
We model the payment channel network as an undirected graph $G=(V,E)$.
A payment channel between users $\ell,r \in V$ is an edge $(\ell,r) \in E$.
By denoting parties in a given channel as $\ell$ (left) and $r$ (right), we benefit from a visual representation of a channel in which transactions move funds across the channel from left to right or from right to left. We denote the balance of user $\ell$ (resp. $r$) in the channel $(\ell,r)$ by $\bal(\ell)$ (resp. $\bal(r)$). The \emph{capacity} of the channel is the total amount of funds locked in the channel. That is, for a channel $(\ell,r)$, the capacity of $(\ell,r)$ is $\bal(\ell) + \bal(r)$. A left-to-right transaction of amount $x$ decreases $\ell$'s balance by $x$ and increases $r$'s balance by $x$ and vice versa for a right-to-left transaction of $x$.

\paragraph{Recharging and rebalancing payment channels.}
When a user in a channel does not have sufficient funds to accept a transaction, the user can either reject the transaction, recharge the channel, or rebalance the channel. 
Recharging the channel happens on-chain and corresponds to closing the payment channel on the blockchain and opening a new channel with more funds. Refer to \Cref{fig:recharging} for an example. 
In contrast, rebalancing the channel happens entirely off-chain.
Here, users find a cycle of payment channels to shift funds from one of their other channels to refund the depleted channel. 
\Cref{fig:off-chain-reb} depicts a simple example of off-chain rebalancing where user $\ell$ wishes to refund their depleted $(\ell,r)$ channel with amount $5$.

\paragraph{Transactions.} We consider a transaction sequence $X_t = (x_1, ..., x_t)$, $x_i \in \mathbb{R}^+$, that arrives at a fixed payment channel online. 
Transactions are processed in the order in which they arrive in the sequence.
Each transaction $x_i$ in the sequence has both a value and a direction along a payment channel. 
The value of a transaction is simply the amount that is being transferred.
The direction of a transaction across a payment channel $(\ell, r)$ determines who is the sender of the transaction and who is the receiver. 
When we have a sequence of transactions that go in both directions along a payment channel, we use $\ltr{x}$ to denote a transaction that goes from left-to-right and $\rtl{x}$ to denote a transaction that goes from right-to-left.
We say a user, wlog $\ell$, \emph{accepts} a transaction of size $x$ coming from the left to right direction along the channel $(\ell,r)$ if $\ell$ agrees to forward $x$ to $r$ (see \Cref{fig:accept} for an example of $r$ accepting transaction $\rtl{1}$). 
Similarly, we say a user $\ell$ \emph{rejects} a transaction $x$ coming from the left to right direction along the channel $(\ell,r)$ when $\ell$ does not forward the transaction to $r$ (see \Cref{fig:reject} for an example of $r$ rejecting transaction $\rtl{9}$). 
When it is clear which channel and direction we are referring to, we simply say $\ell$ accepts or rejects $x$. 

\paragraph{Costs.}
We consider three types of costs in our problem setting:
\begin{enumerate}
    \item \textbf{Rejecting transactions:} For any fixed user $\ell$, the revenue in terms of transaction fees from forwarding a payment of size $x$ is $Rx + f_2$, where $R, f_2 \in \mathbb{R}^+$. Consequently, the cost of rejecting a transaction of size $x$ is simply the opportunity cost of gaining revenue from accepting the transaction, i.e. $Rx + f_2$.
    \item \textbf{On-chain recharging:} For any user $\ell$, the cost of recharging a channel on-chain is $F + f_1$, where $F$ is the amount of funds $\ell$ puts into the new channel
    and $f_1 \in \mathbb{R}^+$ is an auxiliary cost independent of $F$ which captures the on-chain recharging transaction fee. 
    \item \textbf{Off-chain rebalancing:} For any user $\ell$, the cost of off-chain rebalancing for an amount $x$ is $C\cdot(Rx + f_2)$, where $C$ is the length of the cycle along which funds are sent $-1$. In the example of off-chain rebalancing in \Cref{fig:off-chain-reb}, the length of the rebalancing cycle is 3 and thus $C=2$. 
    
\end{enumerate}

Let us denote by $\off$ the optimal offline algorithm and $\on$ an online deterministic algorithm. 
We denote by $\coston(X_t)$ (resp. $\costoff(X_t)$) the total cost of $\on$ (resp. $\off$) given the transaction sequence $X_t$.

\begin{figure}[h]
    \centering
    \includegraphics[scale=0.7]{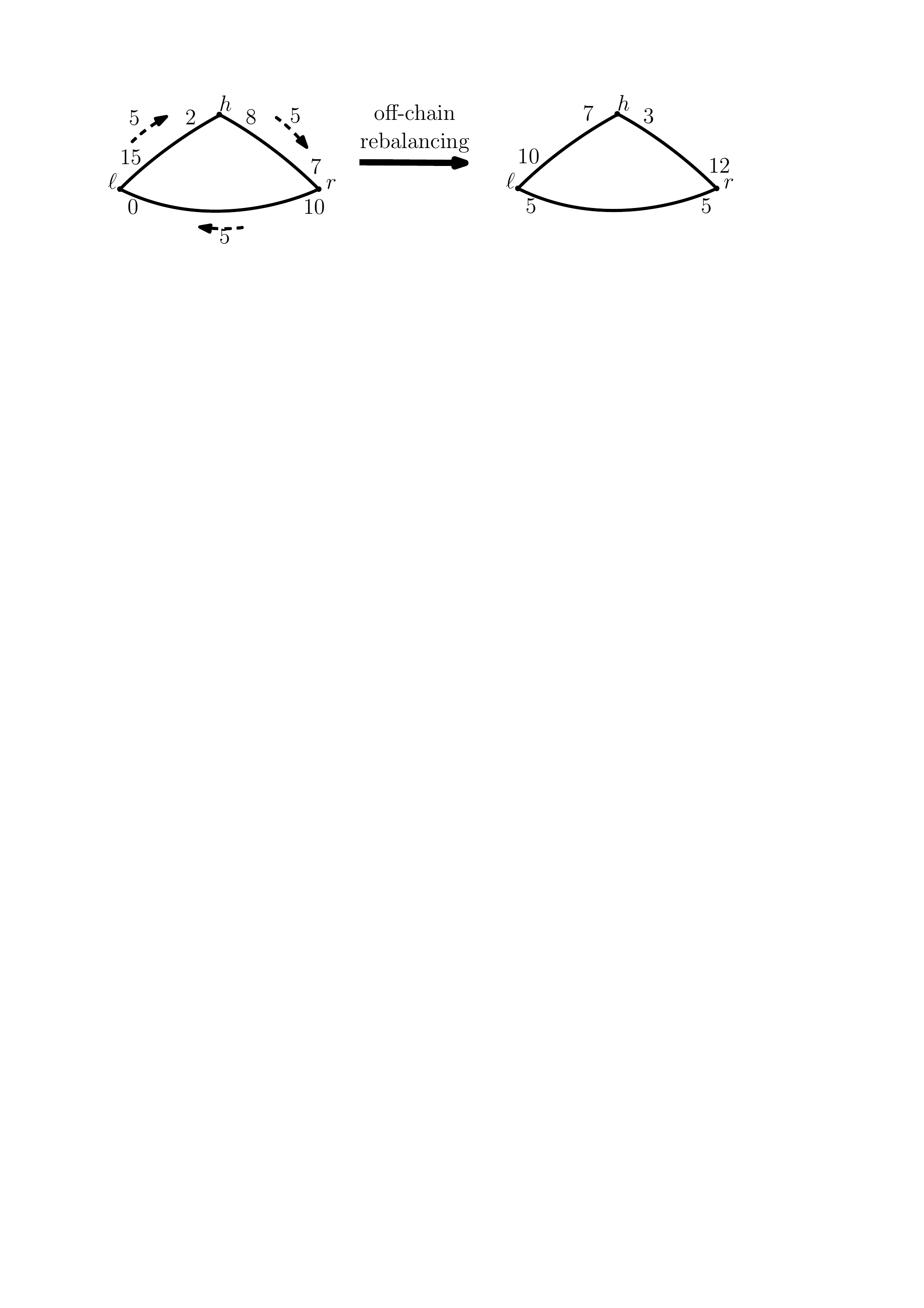}
    \caption{Example of off-chain rebalancing in a payment channel network with users $\ell$, $h$, and $r$.  
    $\ell$ transfers $5$ coins along the channels $(\ell, h)$, $(h, r)$ and $(r, \ell)$ to replenish $\ell$'s balance on the $(\ell, r)$ channel.
    The graph on the right depicts the balances of each user in each channel after performing off-chain rebalancing.}
    \label{fig:off-chain-reb}
\end{figure}

\begin{figure}
\centering

    \subfloat[Accepting a transaction]{{\includegraphics[width=0.45\columnwidth]{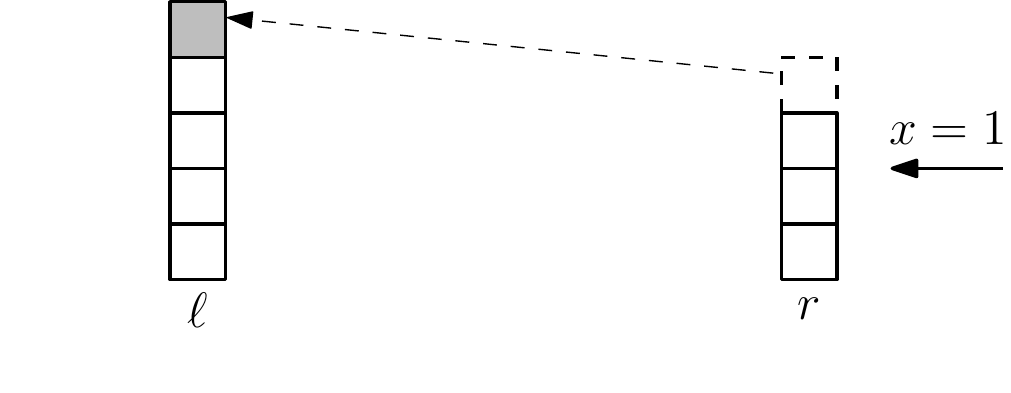}\label{fig:accept} }}%
    \qquad\qquad
    \subfloat[Rejecting a transaction]{{\includegraphics[width=0.45\columnwidth]{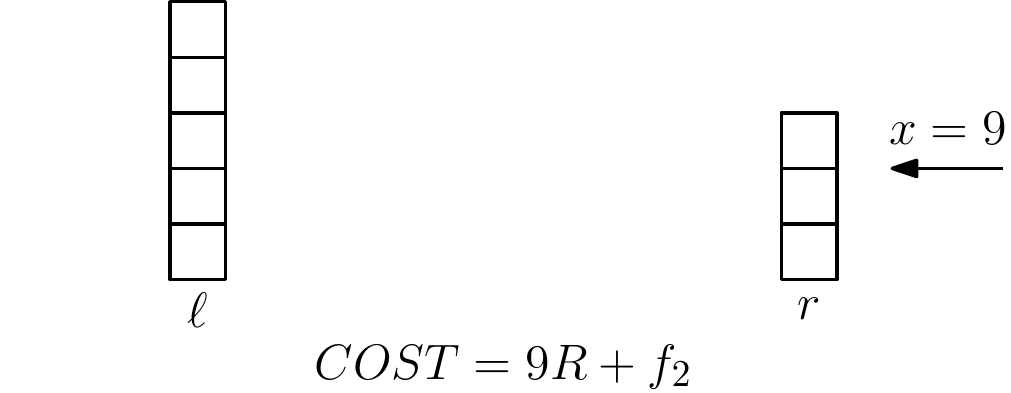}\label{fig:reject} }}%
    
     \subfloat[\centering Off-chain rebalancing]{{\includegraphics[width=0.45\columnwidth]{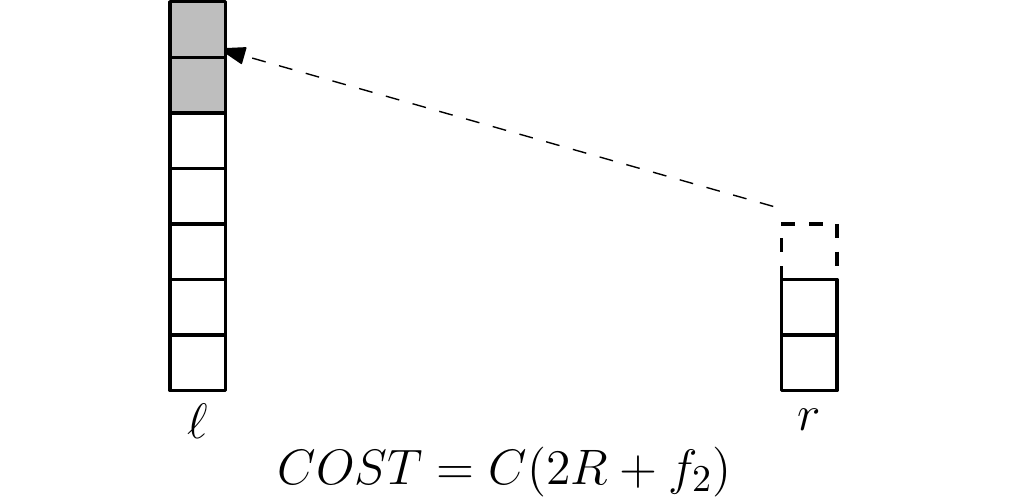}\label{fig:reb} }}%
    \qquad\qquad
    \subfloat[\centering Recharging]{{\includegraphics[width=0.45\columnwidth]{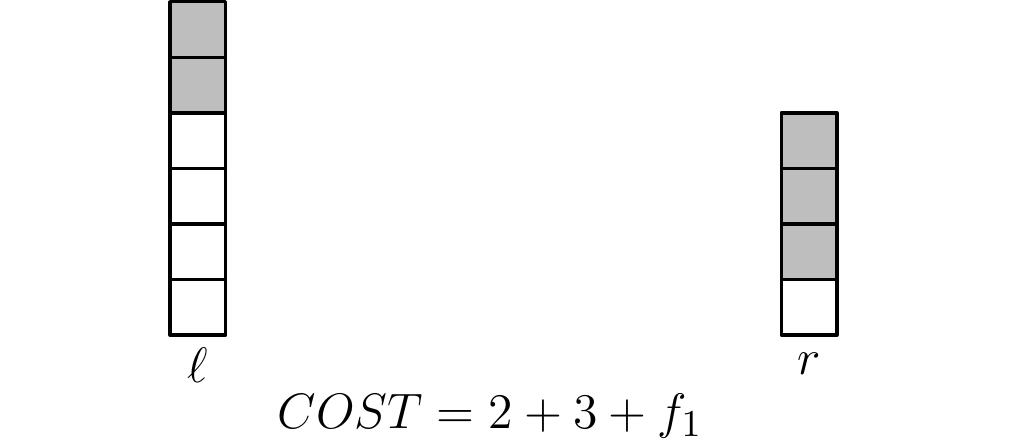}\label{fig:recharging} }}%




\caption{Example of actions users $\ell$ and $r$ can take in the general bidirectional stream setting, and the changes in the balances of both users as a result of these actions. Each square represents $1$ coin. }
  \label{fig:actions}
 \end{figure}

\paragraph{Competitive ratio.}
We say an online algorithm $\on$ is \emph{c-competitive} if for every transaction sequence $X_t$ generated by the adversary,
$$\coston(X_t) \leq c \cdot \costoff(X_t)$$

\subsection{Main and sub problems}\label{sec:problems}
Our main problem is to design a competitive deterministic online algorithm that determines when to accept/reject transactions and when to recharge or rebalance the channel given a bidirectional stream of transactions across a payment channel. 
More precisely, we consider a stream of transactions that can arrive from both right to left or left to right in a given payment channel $(\ell, r)$. 
$\ell$ (resp. $r$) can choose to accept or reject transactions coming in the left-to-right (resp. right-to-left) direction in the stream. 
Either user would incur a cost of $Rx + f_2$ for rejecting a transaction of size $x$.
Both users can also recharge the channel on-chain at any point, incurring a cost of $F_{\ell} + F_r + f_1$ where $F_{\ell}$ and $F_r$ are the funds put into the channel by $\ell$ and $r$ respectively.
Since transactions are streaming in both directions in this model, both users would incur costs in this setting.
Thus, we seek to design an algorithm that minimises the cost of the \emph{entire} channel. 
Refer to \Cref{fig:actions} for examples of the actions that a user can take in our main bidirectional transaction stream setting. 

To this end, we give a formal definition of two sub problems of decreasing restrictiveness on the part of user actions. 
We present these sub problems as the algorithms and analysis used to solve these sub problems are used in developing the algorithm and analysis for our main problem.

\paragraph{Unidirectional stream without rejection}
In this model, we consider the case where transactions stream only in one direction along a given payment channel. 
Here, users cannot reject incoming transactions, or that doing so is not worth it (e.g., when the cost of rejecting a transaction is larger than recharging the channel). 
Formally, given a channel $(\ell,r)$ and a transaction stream from wlog left to right, user $\ell$ only accepts a transaction $x$ if $\bal(\ell) >x$.
Otherwise, $\ell$ has to recharge the channel on-chain with more funds, incurring a cost of $F + f_1$ where $F$ is the amount of funds $\ell$ adds to the channel.
As we only consider transactions streaming in one direction, only one user would incur costs in this setting (the user that has to decide whether to accept or reject transactions). 
A real world example that motivates this setting is a company which wants to position itself as a ``routing hub" in a payment channel network, providing a routing service in return for transaction fees. 
As such, the company would want to accept as many transactions possible to acquire the reputation of a hub that is constantly available.

\paragraph{Unidirectional stream with rejection}
In this model, we still restrict the transaction stream from wlog left to right in a given payment channel $(\ell, r)$. 
However, $\ell$ can reject transactions, incurring a rejection cost of $Rx + f_2$ for a transaction of size $x$. 
$\ell$ can also recharge the channel on-chain at any point, incurring a cost of $F + f_1$ where $F$ is the amount of funds $\ell$ adds to the channel.

\section{Algorithmic Building Blocks}\label{sec:tracker}

Before we describe and analyse the performance of our algorithms in the various settings outlined in \Cref{sec:problems},
we introduce two algorithmic building blocks that we use extensively in the later sections of our work.
The first building block is an algorithm \Call{Funds}{}.
It takes a sequence of transactions as an input and returns the amount of funds that an optimal algorithm uses on this sequence.
The purpose of the algorithm is to track the funds \off has in their channel assuming that the sequence of transactions ends at this point.
For first two sub problems we show how to compute \Call{Funds}{}.
For the main problem, we propose a dynamic programming approach in \Cref{app:dp}.
The second building block is a general recharging online algorithm that calls \Call{Funds}{} as a subroutine and uses the output to decide when and how much to recharge the channel on-chain.
The general idea behind the recharging online algorithm is to recharge whenever the amount of funds in \off's channel ``catches up" to the amount of funds \on has in their channel.

\subsection{Tracking funds of \off}

For a given transaction sequence $X_t = (x_1, \dots, x_t)$, let us denote $A(X_i)$ to be the amount of funds \off would use in the channel if \off gets the sequence $X_i = (x_1, \dots, x_i)$ (i.e. the length $i$ prefix of $X_t$) as input. 
By appending subsequent transactions $x_{i+1}, \dots, x_{t}$ from $X_t$ to $X_i$, we can view $A(X_i)$ as a partial solution to the online optimisation problem that gets updated whenever a new transaction arrives online. 
In the unidirectional transaction stream (with or without rejection) setting, $A(X_i)$ refers to all the funds a fixed user locks into a payment channel.
In the bidirectional transaction stream setting, $A(X_i)$ refers to the total balance of both users in the channel.
We assume that given an input sequence $X_t$, $\Call{Funds}{X_t}$ performs the necessary computations and returns $A(X_t)$.
For our main problem, computing $\Call{Funds}{X_t}$ is generally NP-hard, but we can approximate it to a constant factor, see~\cite{oracle} for more details.

\subsection{Using tracking for recharging}\label{subsec:tracking-alg}

In \Cref{Algorithm:onlinetracker}, we describe an online $(\gamma, \delta)$-recharging algorithm \on that uses \Call{Funds}{} as a subroutine to decide when and how much to recharge the channel.
\on is run by one user (wlog $\ell$) in a payment channel $(\ell, r)$.
\on calls \Call{Funds}{}  
after each transaction to check if the new transaction sequence results in a significant increase in the amount of funds \off has in their channel.
Whenever \on notices that \off's funds have increased above the threshold (\Cref{A2condition} in \Cref{Algorithm:onlinetracker}), \on recharges the channel with an amount of $\gamma (A(X_i) + \delta)$ where $A(X_i)$ is the amount of funds \off has in their channel.
\Cref{tab:examplesequence} depicts an example sequence of the amount of funds inside the channel of \off, the value of the tracking variable, and the amount of funds inside the channel of \on. 

Let us denote $A_t \vcentcolon = \max_{i \le t} A(X_i)$. Now we state and prove some important properties of the $(\gamma, \delta)$-recharging algorithm.

\RestyleAlgo{ruled}

\SetKwFunction{FTracker}{Tracker}

\SetKwComment{Comment}{ }{}
\SetKwInput{KwInitialise}{Initialise}
\SetKwFunction{FTracker}{Tracker}
\begin{algorithm}[t]
  \caption{$(\gamma, \delta)$-recharging}
  \label{Algorithm:onlinetracker}
  \KwInitialise{ $\tracker, X \gets 0, \emptyset$}
  \For{transaction $x$ in order of arrival}{
    concatenate $x$ to $X$\\
    $\tracker' \gets \Call{Funds}{X}$\\
    \uIf{$\tracker' > \tracker$}{ \label{A2condition}
        $\tracker \gets \tracker' + \delta$\\
        recharge to $\gamma \tracker$
    }
  }
\end{algorithm}

\begin{lemma}\label{Lemma: Tracking}
  \Cref{Algorithm:onlinetracker}  with parameters $(\gamma, \delta)$ ensures that \on always has at least $\gamma$ times the amount of funds \off has and ensures that \on incurs a cost of at most $\gamma(A_t + \delta) + f_1 \cdot \lceil \frac{A_t}{\delta} \rceil$.
\end{lemma}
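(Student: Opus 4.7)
The plan is to prove the two claims of the lemma separately, with both resting on a single inductive invariant on the tracker variable $\tracker$ maintained by the algorithm.

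First I would establish, by induction on the number of processed transactions, the invariant that after processing the $i$-th transaction one always has $\tracker \geq A_i$, and in particular $\tracker \geq A(X_i)$. The base case is immediate since $\tracker = 0$ before any transaction arrives. For the inductive step, if the \emph{if}-branch is not entered then $A(X_i) \leq \tracker$, and combined with the inductive hypothesis $\tracker \geq A_{i-1}$ we obtain $\tracker \geq \max(A(X_i), A_{i-1}) = A_i$. If the branch is entered, then the assignment $\tracker \gets A(X_i) + \delta$ makes $\tracker > A(X_i)$, and the triggering condition gives $A(X_i) > $ old $\tracker \geq A_{i-1}$, so again $\tracker \geq A_i$. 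Since each recharge sets \on's channel capacity to $\gamma \tracker$, and transactions merely shift funds within the channel without changing its capacity, the capacity controlled by \on is always at least $\gamma \tracker \geq \gamma A(X_i)$, which is the first claim.

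For the cost bound, let $K$ denote the total number of recharges performed up to time $t$, and for $1 \leq k \leq K$ let $\tau_k$ be the value of $\tracker$ just after the $k$-th recharge and $a_k = A(X_{i_k})$ the value of \Call{Funds}{} that triggered it, so that $\tau_k = a_k + \delta$. The $(k{+}1)$-th recharge can only fire when the new output of \Call{Funds}{} strictly exceeds the current $\tracker$, which gives $a_{k+1} > \tau_k = a_k + \delta$. Iterating and using $a_1 > 0$ we get $a_K > (K-1)\delta$, and combining with $a_K \leq A_t$ yields $K - 1 < A_t/\delta$, so $K \leq \lceil A_t/\delta \rceil$. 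The fixed-fee contribution is therefore at most $f_1 \cdot \lceil A_t/\delta \rceil$. For the variable funding cost, the capacity under \on's control equals $\gamma \tau_k$ after the $k$-th recharge and does not change between recharges, so the cumulative amount of fresh funds injected across all $K$ recharges telescopes to $\gamma \tau_K = \gamma(a_K + \delta) \leq \gamma(A_t + \delta)$. Summing the two contributions gives the claimed bound.

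The only real subtlety — and the likely obstacle — is making the telescoping of the variable funding cost airtight. One must adopt the convention that ``recharging to $\gamma \tracker$'' is an incremental top-up of the channel capacity (rather than closing the old channel and reopening with a completely fresh deposit of $\gamma \tau_k$ at each step), and must note that transactions never reduce the \emph{total} capacity of the channel, only its distribution between the two sides. Under this convention no additional funds have to be injected between successive recharges, and the sum of incremental top-ups collapses to the final capacity $\gamma \tau_K$. Once this accounting convention is pinned down, the rest of the argument is purely arithmetic.
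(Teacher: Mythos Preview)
Your proposal is correct and follows essentially the same approach as the paper: both argue that $\tracker$ always dominates $A(X_i)$ (hence \on's capacity $\gamma\tracker$ dominates $\gamma A(X_i)$), and both bound the number of recharges by noting that each recharge forces the tracked value to jump by at least $\delta$, so there are at most $\lceil A_t/\delta\rceil$ of them, while the cumulative funding cost is the final capacity $\gamma(A_t+\delta)$. Your treatment is in fact more careful than the paper's---you make the induction explicit and flag the accounting convention (that ``recharge to $\gamma\tracker$'' is an incremental top-up so the funding costs telescope), which the paper leaves implicit.
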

\begin{proof}
The first part of the claim follows from the fact that the moment $A(X_i) > \tracker$ for some $i$, $\tracker$ gets updated to $A(X_i) + \delta > A(X_i)$ and \on recharges the channel to $\gamma \tracker > \gamma A(X_i)$.

For the second part of the claim, we note that the cost incurred by \on is simply the total amount of funds added to the channel with an additional cost of $f_1$ each time \on recharges the channel on-chain.
  The amount of funds locked in the channel for \on is always at most $A_t + \delta$ and the times when \on recharges the channel occurs whenever \off increases its funds by an amount of at least $\delta$.
  Thus, the number of rechargings for \on that can occur is at most $\lceil \frac{A_t}{\delta} \rceil$ with a cost of $f_1$ for each recharging instance.
  The total cost incurred by \on is therefore $\gamma(A_t + \delta) + f_1 \cdot \lceil \frac{A_t}{\delta} \rceil$.
\end{proof}

Next, we show a simple lower bound in terms of $A_t$ for the cost of \off given a sequence of transactions $X_t$.

\begin{lemma}\label{Lemma: Money cost}
  If $A_t > 0$, then $\costoff(X_t)$ is at least $A_t + f_1$.
\end{lemma}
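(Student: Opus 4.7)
The plan is to exploit the maximum-over-prefixes that defines $A_t$: pick an index $i^*$ with $A(X_{i^*})=A_t$, lower-bound $\costoff(X_{i^*})$ using that the $A_t$ funds can only have been supplied by on-chain recharges, and then extend the bound to $\costoff(X_t)$ by a monotonicity argument.

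First, by the definition of $A(X_{i^*})$, \off's optimal strategy on $X_{i^*}$ locks a total of $A_t$ coins into the channel. Funds enter the channel only through on-chain recharging: off-chain rebalancing merely shifts funds around a cycle of channels without changing the total capacity of the target channel, and forwarding a transaction only redistributes $\bal(\ell)$ and $\bal(r)$. Since each recharging operation costs $f_1$ on top of the amount it contributes, and since $A_t>0$ forces at least one recharge, we conclude $\costoff(X_{i^*})\ge A_t+f_1$.

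Second, I claim $\costoff(X_t)\ge \costoff(X_{i^*})$. To see this, take \off's optimal strategy on $X_t$ and truncate it by discarding every recharge, rebalance, and rejection decision taken strictly after transaction $x_{i^*}$. The truncated strategy is still feasible on $X_{i^*}$ because the balance faced by each transaction $x_j$ with $j\le i^*$ depends only on the events preceding $x_j$, and those events coincide in the two strategies. The cost of the truncated strategy on $X_{i^*}$ is obtained from the cost of the full strategy on $X_t$ by dropping only non-negative terms, hence is at most $\costoff(X_t)$. This gives $\costoff(X_{i^*})\le\costoff(X_t)$, and chaining this with the previous inequality yields $\costoff(X_t)\ge A_t+f_1$.

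The only step that requires genuine care is verifying that truncation produces a feasible offline strategy on the prefix $X_{i^*}$; the rest is bookkeeping about where funds can come from and about the per-recharge fee $f_1$.
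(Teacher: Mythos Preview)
Your proof is correct and follows essentially the same approach as the paper's: both arguments pick the prefix $X_{i^*}$ where $A(X_{i^*})=A_t$, observe that supplying $A_t>0$ coins requires recharging cost at least $A_t+f_1$, and then appeal to monotonicity $\costoff(X_{i^*})\le\costoff(X_t)$. Your truncation argument for monotonicity is just the explicit form of what the paper states more tersely (``any action of \off at step $i$ can only increase its cost''), and your remark that rebalancing does not increase total channel capacity is a detail the paper leaves implicit.
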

\begin{proof}
We first note that the sequence of costs for \off is monotonically increasing, i.e. $\costoff(X_t) \le \costoff(X_{t+1})$.
  This comes from the fact that any action of \off at step $i$ of the sequence can only increase its cost (i.e. either rejecting $x_{i+1}$ or recharging the channel and then accepting $x_{i+1}$), or it does not change the cost at all (i.e. by accepting $x_{i+1}$ without recharging).

Since $A_t >0$, we know that \off recharged on-chain at some point to an amount $A_t$ for a recharging cost of $A_t+f_1$. Since the sequence of costs for \off is monotonically increasing, $\costoff(X_t) \ge A_t + f_1$.
\end{proof}

\begin{table}[htb!]
\centering
\begin{tabular}{||c c c||} 
 \hline
 $A(X_i)$ & $\tracker$ & Amount locked in the channel for \on \\ [0.5ex] 
 \hline\hline
 $0$ & $0$ & $0$ \\ 
 \hline
 $\eps$ & $\delta + \eps$ & $\gamma (\delta + \eps)$  \\
 \hline
 $\delta$ & $\delta + \eps$ & $\gamma (\delta + \eps)$ \\
 \hline
 $\delta+2\eps$ & $2(\delta + \eps)$ & $2\gamma (\delta + \eps)$ \\
 \hline
\end{tabular}
\caption{\on is the $(\gamma, \delta)$- recharging algorithm as described in \Cref{Algorithm:onlinetracker}.
  An example sequence of the amount locked in the channel of \off, the value of the tracker $\tracker$, and the amount locked in the channel of \on after 4 transactions.
  We assume $\delta > \eps >0$.}
\label{tab:examplesequence}
\end{table}

\section{Step 1: Unidirectional transaction stream without rejection}\label{sec:accept-all}

In this section we consider the first sub problem where, given a payment channel $(\ell, r)$, transactions stream along the channel in only one direction (wlog left to right). 
Moreover, $\ell$ has to accept an incoming transaction of size $x$ and forward it to $r$ if $\ell$'s balance $\bal(\ell) \ge x$.
Otherwise, $\ell$ needs to recharge the channel on-chain (and accept the transaction after). 

The optimal offline algorithm \off follows a simple strategy:
since it knows the entire stream of transactions in advance, it makes a single recharging action at the beginning of the transaction sequence $X_t$ of size $\sum_{i=1}^t x_i$.
The cost incurred by \off is thus $f_1 + \sum_{i=1}^t x_i$. 

\RestyleAlgo{ruled}
\SetKwComment{Comment}{/* }{ */}
\SetKwInput{KwInitialise}{Initialise}
\SetKwFunction{FTracker}{Tracker}
\begin{algorithm}[t]
  \caption{Unidirectional transaction stream without rejection}
  \label{Algorithm:acceptall}
  \KwInitialise{tracker $\tracker, X \gets 0, \emptyset$}
  \KwInitialise{balance $b= 0$}
  \For{transaction $x$ in order of arrival}{
    concatenate $x$ to $X$\\
    $\tracker' \gets \Call{Funds}{X}$\\
    \uIf{$\tracker' > \tracker$}{
        $\tracker \gets \tracker' + f_1$\\
        recharge to $\tracker$
    }
    Accept $x$
  }
\end{algorithm}

Now, we present a $2$-competitive online algorithm \on for this sub problem (\Cref{Algorithm:acceptall}).
\on uses $(\gamma, \delta)$-recharging with parameters $\gamma=1$ and $\delta=f_1$.
The recharging ensures that \on always has enough funds to accept a transaction.
The following theorem shows that \on is $2$-competitive.

\begin{theorem}\label{theorem_accept_2_comp}
  \Cref{Algorithm:acceptall} is $2$-competitive in the unidirectional transaction stream without rejection.
\end{theorem}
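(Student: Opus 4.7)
The plan is to apply the two building-block lemmas directly, once we identify the right value of $A_t$ in this setting. First, I would observe that in the unidirectional no-rejection setting, \off is forced to accept every transaction, so any feasible solution on the prefix $X_i$ must lock at least $\sum_{j=1}^{i} x_j$ into the channel. Since \off can in fact achieve this by a single recharge at time $0$, we have $A(X_i) = \sum_{j=1}^{i} x_j$, and in particular $A_t = \sum_{j=1}^{t} x_j$ and $\costoff(X_t) = A_t + f_1$ (using \Cref{Lemma: Money cost} for the lower bound and the single-shot offline strategy for the matching upper bound).

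Next I would verify that \Cref{Algorithm:acceptall} is feasible, i.e., that \on never gets stuck without enough balance to accept an arriving transaction. This follows from the first part of \Cref{Lemma: Tracking} applied with $\gamma = 1$: immediately before processing $x_i$, \on has refreshed its balance to at least \off's total locked funds on $X_i$, which is $\sum_{j=1}^{i} x_j$. Since \on's outflow up to that point is $\sum_{j=1}^{i-1} x_j$, it has at least $x_i$ remaining to forward, so the ``Accept $x$'' line is always valid.

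For the competitive ratio, I would invoke the cost bound in \Cref{Lemma: Tracking} with $\gamma = 1$ and $\delta = f_1$, giving
\[
\coston(X_t) \;\le\; A_t + f_1 + f_1 \left\lceil \frac{A_t}{f_1} \right\rceil
\;\le\; A_t + f_1 + (A_t + f_1) \;=\; 2(A_t + f_1)
\;=\; 2\,\costoff(X_t).
\]
The only mild obstacle is the ceiling bound $\lceil A_t / f_1 \rceil \le A_t/f_1 + 1$, which is what makes the two ``$f_1$'' terms combine cleanly into $2(A_t+f_1)$; this is where the choice $\delta = f_1$ is essential, as picking $\delta$ smaller would inflate the recharging-overhead term and picking it larger would inflate the locked-funds term. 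Together the two inequalities yield \Cref{theorem_accept_2_comp}.
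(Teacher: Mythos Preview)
Your proposal is correct and follows essentially the same approach as the paper: invoke \Cref{Lemma: Tracking} with $\gamma=1$, $\delta=f_1$ for the upper bound on $\coston$, bound the ceiling by $A_t/f_1+1$, and compare against the lower bound $\costoff(X_t)\ge A_t+f_1$ from \Cref{Lemma: Money cost}. Your write-up is in fact slightly more thorough, since you also spell out why $A(X_i)=\sum_{j\le i}x_j$ in this sub problem and explicitly verify that \on always has enough balance to accept $x_i$ (a fact the paper asserts just before the theorem but does not argue inside the proof).
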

\begin{proof}
  From \Cref{Lemma: Tracking}, setting $\gamma =1$ and $\delta =f_1$ gives \on a cost of at most $A_t + f_1 + f_1 \cdot \lceil \frac{A_t}{f_1} \rceil$.
  Since $f_1\cdot \lceil \frac{A_t}{f_1} \rceil \le f_1 \cdot (\frac{A_t}{f_1} + 1) = A_t + f_1$, the cost of \on is at most $2(A_t + f_1)$.
  From \Cref{Lemma: Money cost}, we know that the cost of \off is at least $A_t + f_1$. Thus, \on is $2$-competitive.
\end{proof}

In addition, we note that \on is optimal in this setting.
The next theorem proves that no deterministic algorithm can achieve a strictly smaller competitive ratio compared to \on.
The proof shows that \on cannot lock too much funds into the channel, otherwise \on's cost is too high,
but if \on locks too little funds, it needs to recharge often.

\begin{theorem}\label{thm:acceptallopt}
  There is no deterministic algorithm that is $c$-competitive for $c<2$ in the unidirectional transaction stream without rejection sub problem.
\end{theorem}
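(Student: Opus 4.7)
Plan: The argument is a standard adversarial-sequence lower bound that formalizes the intuition stated in the excerpt: if \on locks a lot of funds up front it is wasteful on short sequences, whereas if it locks too little it must recharge often and pay the $f_1$ fee repeatedly. The adversary will play these two regimes off against each other. Fix a deterministic algorithm \on, assume toward a contradiction that it is $c$-competitive for some $c<2$, and let the adversary feed \on a stream of transactions all of the same small size $\eps>0$, continuing for as long as needed. Since rejection is disallowed, \on must recharge the channel whenever its balance drops below $\eps$, while \off covers every prefix of length $t$ with a single recharge of cost $t\eps+f_1$.

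Introduce the bookkeeping: let $F_k$ denote the size of \on's $k$-th recharge, $S_k=F_1+\cdots+F_k$ its running total, and $i_k$ the index of the transaction triggering it. A one-line induction gives $i_k\,\eps = S_{k-1}+\eps$, and \on's accumulated cost just after the $k$-th recharge equals $S_k+k f_1$. Plugging these into the $c$-competitive condition at time $t=i_k$ and letting $\eps\to 0$ (the adversary is free to pick $\eps$ arbitrarily small) yields the linear recurrence
\[
S_k \ \le\ c\, S_{k-1} + (c-k)\, f_1, \qquad S_0=0,
\]
which upper-bounds how much total funding \on can possibly have added to the channel after being forced into $k$ recharges.

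The core step is to show this recurrence is infeasible for $c<2$, in the sense that it must eventually force $F_k = S_k-S_{k-1}\le 0$, contradicting $F_k\ge \eps>0$. The cases $c\le 1$ are immediate since the recurrence already yields $S_1\le (c-1)f_1\le 0$. For $1<c<2$ the recurrence is first-order linear with homogeneous part $c\,S_{k-1}$ and affine forcing $(c-k)f_1$, so the extremal solution decomposes into an affine function of $k$ plus a dominant term proportional to $-c(2-c)\,c^k/(c-1)^2$, which is strictly negative and grows exponentially; hence this upper bound tends to $-\infty$. The adversary simply prolongs the stream until \on is forced to attempt a recharge index $K$ at which the upper bound on $S_K$ is already negative, contradicting $S_K\ge 0$. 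The main obstacle is the closed-form analysis of the recurrence; it is routine but requires either solving it explicitly, or iterating the derived bound on $F_k$ to see that it becomes non-positive once $c^{k-1}$ exceeds a $c$-dependent constant — which happens at a finite $k$ for any $c<2$.
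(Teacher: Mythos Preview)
Your proposal is correct and reaches the same conclusion via the same adversarial input (a stream of equal tiny transactions), but the technical execution differs from the paper's.

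The paper argues by a case split on the balance \on ever attains after a recharge: either \on always stays below roughly $f_1$, in which case the number of forced recharges grows like $t\eps/f_1$ and the ratio of costs tends to at least $2$; or \on at some time $t$ reaches balance about $f_1$, and then the accumulated recharge fees up to that point plus the current one already total at least $2\costoff(X_t)$. Your route instead applies the $c$-competitiveness constraint at every recharge instant to obtain the recurrence $S_k \le c\,S_{k-1}+(c-k)f_1$, solves it, and notes that for $1<c<2$ the homogeneous part carries a strictly negative coefficient $-c(2-c)/(c-1)^2$ on $c^k$, so the bound is eventually negative. This is more systematic: it avoids guessing the threshold $f_1$, handles all $c<2$ uniformly, and makes the dependence on $c$ explicit.

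A few points to tighten when you write it out in full. First, the claimed equality $i_k\eps = S_{k-1}+\eps$ is really the inequality $i_k\eps \le S_{k-1}+\eps$ (leftover balance before a forced recharge lies in $[0,\eps)$); only this upper bound is needed. Second, you are implicitly assuming \on recharges only when its balance drops below $\eps$; this is without loss of generality since postponing any proactive recharge to the next forced one can only merge recharges and hence lower \on's cost, but it deserves a sentence. Third, ``letting $\eps\to 0$'' should be made concrete: the actual recurrence carries an extra $+c\eps$ term, whose iterated contribution is $\eps\,c(c^k-1)/(c-1)$, so picking any $\eps < f_1(2-c)/(c-1)$ keeps the net coefficient of $c^k$ negative and the argument goes through for that fixed $\eps$.
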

\begin{proof}
  We prove the theorem by contradiction.
  For the sake of contradiction, suppose that there exists a $c$-competitive algorithm \on for $c = 2-\eps$ for some $\eps >0$.
  Consider the following sequence of transactions: $\frac{\eps}{3}, \frac{\eps}{3}, \frac{\eps}{3},  \dots$.
  We note that when the sequence of transactions is of length $k$, the cost of \off is $f_1 + k\cdot \frac{\eps}{3}$ as the optimal solution is to recharge the channel at the start of the sequence to the total sum of the transactions in the sequence.

  For \on to remain $(2-\eps)$-competitive after processing the first transaction, \on locked at most $f_1 - \eps f_1 + \frac{\eps}{3}$ in the channel ($\coston(X_1) = 2f_1 -\eps f_1 +2\frac{\eps}{3}$).
  
  We generalize the above idea and show that either \on has always smaller amount than $f_1 - \frac{\eps}{3}$ in the channel or at some point it has at least $f_1- \frac{\eps}{3}$.
  In both cases, we derive a contradiction to the $(2-\eps)$ competitive ratio of \on.

  First, suppose that \on always recharges to at most $f_1 - \eps'$ for some $\eps' > 0$.
  Then after $t$ transactions, the number of rechargings is at least $\lceil \frac{t \frac{\eps}{3}}{f_1 - \eps'} \rceil$.
  So $\coston(X_t) \ge t \cdot \frac{\eps}{3} + f_1\lceil \frac{t \frac{\eps}{3}}{f_1 - \eps'} \rceil$.
  Setting $t = \frac{3k(f_1 - \eps')}{\eps}$ for some $k$ gives $\costoff(X_t) = f_1 + k(f_1 - \eps')$ and $\coston(X_t) = k(f_1 - \eps') + kf_1$, but since
  $\lim_{k \to \infty} \frac{2kf_1 - k \eps'}{(k+1)f_1 - k\eps'} = \frac{2f_1 - \eps'}{f_1 - \eps'}$ for any $\eps'$, then the competitive ratio is at least $2$.

  Now, suppose that $t$ is the first time that after processing a transaction, \on has at least $f_1 - \frac{\eps}{3}$ locked in the channel.
  At time $t$, $\costoff(X_t) = f_1 + t\frac{\eps}{3}$. Cost of \on is $f_1 + t \frac{\eps}{3}$ for funds locked in the channel plus any additional recharging cost.
  But since it is the first time \on recharged by more than $f_1$, the cost for recharging is $f_1\lceil \frac{t\frac{\eps}{3}}{f_1-\eps'} \rceil \ge f_1 + \frac{t\eps}{3}$ for some other positive $\eps'$.
  So again, $\coston(X_t) \ge t \frac{\eps}{3} + f_1 + t \frac{\eps}{3} + f_1 = 2(f_1 + t\frac{\eps}{3})$ which is twice of $\costoff(X_t)$.
  
  In both cases the cost of \on is at least twice that of \off which contradicts the assumption that \on is $(2-\eps)$-competitive.
\end{proof}

\section{Step 2: Unidirectional transaction stream with rejection}\label{sec:rejection}

In this section we consider the second sub problem where transactions are still streaming along a given payment channel $(\ell, r)$ in one direction (wlog left to right).
This time though, a user can choose to reject incoming transactions.
We describe an algorithm (detailed in \Cref{Algorithm:reject}) with competitive ratio $2+\gor$.
We note that the competitive ratio for this setting is larger than the competitive ratio we achieve in the previous setting as \off has a wider range of decisions.

Let us call a transaction of size $x$ \emph{big} if $x > \reject{x}$ and \emph{small} otherwise.
We first observe that \off in this setting always rejects big transactions. 

\begin{lemma}\label{lemma:rejectbig}
  \off rejects all big transactions in the \unirej.
\end{lemma}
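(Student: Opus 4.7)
My plan is an exchange argument. Suppose, for contradiction, that an optimal offline strategy $\off$ accepts some big transaction $x$ (so $x > Rx + f_2$). I will construct a strictly cheaper strategy, contradicting optimality.

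First I would observe that in the unidirectional setting, the only way $\bal(\ell)$ increases is via on-chain rechargings, and the only way it decreases is via accepted transactions. Therefore, writing $F_k$ for the amounts recharged over time and $S$ for the multiset of accepted transactions, feasibility requires $\sum_k F_k \ge \sum_{y \in S} y$, and an optimal $\off$ has no incentive to recharge more than this (any slack could be removed to save money without breaking feasibility). So I can assume without loss of generality that $\sum_k F_k = \sum_{y \in S} y$.

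Now consider the modified strategy $\off'$ that is identical to $\off$ except that $x$ is rejected rather than accepted. Since the set of accepted transactions shrinks by $\{x\}$, the feasibility bound drops by exactly $x$, so $\off'$ can reduce its total recharged amount by $x$ (for instance, by shaving the amount off its final recharging, or eliminating a recharging entirely if possible). Thus the on-chain recharging cost of $\off'$ decreases by at least $x$, while the rejection cost of $\off'$ grows by exactly $Rx + f_2$. The net change in cost is at most $(Rx + f_2) - x < 0$ by the definition of $x$ being big, so $\off'$ is strictly cheaper than $\off$, a contradiction.

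The only subtle step will be justifying that the saved $x$ in recharged funds really translates to $x$ in cost savings (i.e.\ ensuring we are not double-counting or ignoring the $f_1$ overhead per recharging). Since an optimal $\off$ has $\sum_k F_k = \sum_{y \in S} y$, reducing the right-hand side by $x$ allows a matching reduction on the left, which saves at least $x$ in recharging cost (and possibly an additional $f_1$ if a recharging can be eliminated); either way the bound $(Rx + f_2) - x < 0$ suffices. Iterating this argument over all big transactions (or applying it to an arbitrary big transaction that $\off$ accepts) yields the claim.
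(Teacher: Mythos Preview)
Your proof is correct and takes essentially the same approach as the paper: both argue that accepting a big transaction consumes $x$ in recharged funds whereas rejecting costs only $Rx+f_2<x$, so an optimal strategy rejects. The paper compresses this into two sentences of direct cost comparison, while you spell it out as an explicit exchange argument; the underlying idea is identical.
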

\begin{proof}
  Accepting a transaction $x$ incurs a cost of $x$ for increasing funds.
  Rejecting a transaction $x$ incurs a cost of $\reject{x}$.
  So any big transaction should be rejected.
\end{proof}

Thus, the strategy of \off in this setting is to simply reject all big transactions.
Moreover, if there are sufficiently many small transactions in the sequence to offset the cost of recharging, \off makes a single recharging action at the beginning of the sequence of size $\sum_{x \in X_t, x \text{ is  small }} x$ for a cost of $f_1 + \sum_{x \in X_t, x \text{ is small }} x$. 

\RestyleAlgo{ruled}

\SetKwComment{Comment}{/* }{ */}
\SetKwFunction{FTracker}{Tracker}
\SetKwInput{KwInitialise}{Initialise}
\begin{algorithm}[t]
  \caption{Unidirectional transaction stream with rejection}
  \label{Algorithm:reject}
  \KwInitialise{tracker $\tracker, X \gets 0, \emptyset$}
  \KwInitialise{balance $b=0$}
  \For{transaction $x$ in order of arrival}{
    concatenate $x$ to $X$\\
    $\tracker' \gets \Call{Funds}{X} $\\
    \uIf{$\tracker' > \tracker$}{
        $\tracker \gets \tracker' + \gor f_1$\\
        recharge to $\tracker$
    }
    \uIf{ $b \ge x$ and $x$ is small }{
      Accept $x$ \label{A4accept}
    }
    \uElse{
    Reject $x$
    }
  }
\end{algorithm}

\Cref{Algorithm:reject} performs $(1, \gor f_1)$-recharging and it accepts a transaction $x$ if it has enough funds and $x$ is small.
The following theorem states that \on is $(2+\gor)$-competitive in this problem setting.

\begin{theorem}\label{Theorem: rejection}
  \Cref{Algorithm:reject} is $(2+\gor)$-competitive in the \unirej sub problem.
\end{theorem}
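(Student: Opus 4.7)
The plan is to decompose $\coston$ into three parts -- recharging, pre-$\tau$ small-transaction rejection, and big-transaction rejection -- and show their sum is at most $(2+\gor)\,\costoff$. By \Cref{lemma:rejectbig} and the fact that, whenever \off recharges, it does so once with the full sum of small transactions, one has $\costoff = A_t + f_1 + \sum_{x\text{ big}}(Rx+f_2)$ in the case $A_t>0$; the case $A_t=0$ is trivial since \on then also never recharges and rejects exactly what \off rejects.

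Let $\tau$ be the first index with $A(X_i)>0$ and let $S_i$ denote the total volume of small transactions in $X_i$, so $A_t=S_t$. For $i<\tau$ the tracker stays at $0$ and \on's balance is $0$, so \on rejects every transaction. The cost of the pre-$\tau$ small rejections is
\begin{align*}
\sum_{x\text{ small in }X_{\tau-1}}(Rx+f_2) \;=\; S_{\tau-1} + \sum_{x\text{ small in }X_{\tau-1}}((R-1)x+f_2) \;\le\; S_{\tau-1}+f_1,
\end{align*}
since $A(X_{\tau-1})=0$ implies $\sum_{x\text{ small in }X_{\tau-1}}((R-1)x+f_2)\le f_1$. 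For $i\ge \tau$, an argument mirroring the one used for \Cref{Algorithm:acceptall} shows that, whenever a small transaction arrives, either the tracker update just triggered supplies exactly enough funds to accept it, or the $\gor f_1$ buffer carried since the last update still covers it; hence \on accepts every post-$\tau$ small transaction and rejects only the big ones, whose cost coincides with \off's.

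The technical heart of the proof is a tighter bound on the number $N$ of \on's recharges than \Cref{Lemma: Tracking} directly yields. After the initial recharge at $\tau$ the tracker is $S_\tau+\gor f_1$, every subsequent update strictly raises it by more than $\gor f_1$, and the final tracker value is at most $A_t+\gor f_1$. Hence the number of post-$\tau$ recharges is at most $(A_t-S_\tau)/(\gor f_1)$, and using the golden-ratio identity $1/\gor=1+\gor$ the total recharging cost satisfies
\begin{align*}
\text{recharging} \;\le\; (A_t+\gor f_1) + f_1 + (A_t-S_\tau)(1+\gor) \;=\; A_t(2+\gor) - S_\tau(1+\gor) + (1+\gor)\,f_1.
\end{align*}

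Adding the pre-$\tau$ small rejection bound $f_1+S_{\tau-1}$ and writing $\text{bigs}:=\sum_{x\text{ big}}(Rx+f_2)$, I get $\coston \le A_t(2+\gor) - S_\tau(1+\gor) + (2+\gor)\,f_1 + S_{\tau-1} + \text{bigs}$. Substituting $S_\tau=S_{\tau-1}+x_\tau$, the combination $-S_\tau(1+\gor)+S_{\tau-1}$ reduces to $-\gor\,S_{\tau-1}-(1+\gor)\,x_\tau\le 0$, which can be dropped to give $\coston \le (2+\gor)(A_t+f_1)+\text{bigs} \le (2+\gor)\,\costoff$. The delicate step is precisely this refined recharge count: the generic \Cref{Lemma: Tracking} bound leaves an unmatched additive $S_{\tau-1}$ in the analysis, and the value $\delta=\gor f_1$ is chosen exactly so that the leftover budget for post-$\tau$ updates absorbs the pre-$\tau$ rejection contribution, yielding the ratio $2+\gor$.
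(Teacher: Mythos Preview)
Your proof is correct and follows the paper's overall strategy---cancel the big rejections, then bound small rejections and recharging separately---but adds an essential refinement the paper's argument lacks. The paper bounds the small-rejection cost by $A_t+f_1$ and the recharging cost via the generic \Cref{Lemma: Tracking} by $A_t+\gor f_1+f_1\lceil A_t/(\gor f_1)\rceil$; summing these literally gives $A_t(3+\gor)+f_1(2+\gor)$, which is \emph{not} $\le (2+\gor)(A_t+f_1)$, so the displayed equality in the paper's proof does not hold as written. Your sharper recharge count---observing that after the initial recharge the tracker already sits at $S_\tau+\gor f_1$, so at most $(A_t-S_\tau)/(\gor f_1)$ further recharges occur---is precisely what is needed: the $-S_\tau(1+\gor)$ term it produces absorbs the pre-$\tau$ rejection contribution $S_{\tau-1}+f_1$ and closes the arithmetic to the claimed $2+\gor$. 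In effect, your argument patches a gap in the paper's final computation while keeping the same skeleton.

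One minor remark: your claim that \on accepts every post-$\tau$ small transaction is correct but stated tersely. Spelled out: if the last recharge was at time $j$, then just before a small $x_i$ with no recharge at $i$ the balance is $S_j+\gor f_1-(S_{i-1}-S_{j-1})$; since no recharge at $i$ means $S_i\le S_j+\gor f_1$, one gets balance $\ge S_j+\gor f_1-S_{i-1}\ge S_i-S_{i-1}=x_i$.
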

\begin{proof}
  From \Cref{lemma:rejectbig}, \off rejects big transactions. Thus, \on should also reject these transactions.

  While $A_t = 0$, both \off and \on reject all transactions in the sequence and both incur the same cost.
  The moment $A_t > 0$, we know that \off recharged with an amount at least $A_t$ to accept all small transactions in the sequence.
  Thus $\costoff(X_t) \ge A_t + f_1$.

  At this time \on would have rejected the small transactions in the sequence for at most a cost of $A_t + f_1$ together with some additional recharging cost.
  From \Cref{Lemma: Tracking}, we know that the recharging cost for \on is at most
  \[
      A_t + \gor f_1 + \left\lceil \frac{A_t}{\gor f_1} \right\rceil f_1
      \le A_t + \gor f_1 + \frac{A_t}{\frac{\sqrt{5}-1}{2}} +f_1.
  \]
  Summing up both costs, we get 
  \begin{align*}
      \coston(X_t) \le& A_t + f_1 + A_t + \frac{\sqrt{5} -1}{2}f_1 + \frac{A_t}{\frac{\sqrt{5}-1}{2}} +f_1\\
      =& \left(2 + \frac{\sqrt{5}-1}{2}\right) \costoff(X_t)
  \end{align*}
\end{proof}

Before analysing the optimality of \on, we first observe, as a simple corollary of \Cref{thm:acceptallopt}, that the lower bound of $2$ also holds for this sub problem.

\begin{corollary}
There is no deterministic algorithm that is $c$-competitive for $c<2$ in the unidirectional transaction stream with rejection sub problem.
\end{corollary}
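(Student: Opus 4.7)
The plan is to reduce to \Cref{thm:acceptallopt} by showing that for a suitable choice of parameters the rejection option is effectively useless to any competitive online algorithm, so the lower bound for the without-rejection setting transfers.

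First I would reuse the exact adversarial sequence from the proof of \Cref{thm:acceptallopt}: identical small transactions of size $\eps/3$. The only extra freedom in the \unirej setting is that the algorithm may now refuse a transaction at cost $Rx + f_2$. The adversary is free to choose $R$ and $f_2$ as part of the instance, so I would set $f_2$ much larger than the total offline cost the adversary intends to force, i.e. $f_2 \gg f_1 + t \cdot \eps/3$, where $t$ is the prefix length that makes the ratio in \Cref{thm:acceptallopt} close to $2$.

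With such an $f_2$, a single rejection already costs more than $(2-\eps)\,\costoff(X_t)$, so a hypothetical $(2-\eps)$-competitive algorithm cannot afford to reject any transaction on this prefix. On the offline side, the transactions are small (in the sense of \Cref{lemma:rejectbig}) for this choice of $f_2$, so \off too never rejects and behaves exactly as in \Cref{thm:acceptallopt}: recharge once to $\sum x_i$ at the start for total cost $f_1 + t \cdot \eps/3$. Hence on this instance both \on and \off are forced to play strategies from the without-rejection setting.

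At this point the corollary follows immediately, since \Cref{thm:acceptallopt} already exhibits, for every $\eps > 0$, a prefix on which any deterministic algorithm without rejection incurs cost at least $(2-\eps)$ times \off. The only subtlety I anticipate is bookkeeping: $f_2$ must be chosen large enough relative to the prefix length $t$ used to beat $2-\eps$, but since the adversary fixes both the parameters and the sequence, a uniform choice such as $f_2 = t \cdot f_1$ suffices and no quantitative loss is incurred in the reduction.
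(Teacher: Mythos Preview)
Your proposal is correct and matches the paper's intent: the paper states the result as an immediate corollary of \Cref{thm:acceptallopt} without giving any argument, and your reduction (take $f_2$ large enough that rejection is never worthwhile for a $(2-\eps)$-competitive algorithm, so both \on and \off behave exactly as in the without-rejection setting) is precisely the natural way to flesh this out. One small point worth tightening: the ``bad'' prefix length $t$ in the proof of \Cref{thm:acceptallopt} can depend on the algorithm, and the algorithm's behaviour may in turn depend on $f_2$; however, once you restrict to algorithms whose balance stays below $f_1-\eps/3$ (Case~1 there), the required $t$ is bounded uniformly in terms of $f_1,\eps,c$ only, so a single sufficiently large $f_2$ breaks the apparent circularity.
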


We conjecture that no other deterministic algorithm can perform better that \on in this setting.
Moreover, we sketch an approach to prove the conjecture in \Cref{app:rejection_sketch}.

\begin{conjecture}\label[conjecture]{conjecture:rejection}
  There is no deterministic algorithm that is $c$-competitive for $c<2 + \gor$ in the unidirectional transaction stream with rejection setting.
\end{conjecture}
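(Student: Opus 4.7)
The plan is to proceed by contradiction along the lines of \Cref{thm:acceptallopt}, extending the argument to exploit the additional rejection action. Assume for contradiction that some deterministic algorithm \on achieves competitive ratio $c < 2 + \gor$, and let $\eps > 0$ be a small parameter. I would construct adaptively a transaction sequence $X_t$ of small transactions (in the sense of \Cref{lemma:rejectbig}) of individual size $\eps$ and show that $\coston(X_t) > c \cdot \costoff(X_t)$, contradicting the assumption. The adversary feeds transactions one by one; since \on is deterministic, every prefix determines \on's actions, so the adversary can decide at each step whether to continue or to terminate the sequence.

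The core of the argument is that at the moment of its first recharge, \on is implicitly committing to two parameters: a rejection-cost budget of the form $\beta f_1$ it has absorbed before recharging, and an over-lock buffer $\delta$ above what \off would use. These two choices drive the adversary into one of three regimes. In regime (i), if $\delta \gg \gor f_1$, the adversary stops immediately after the recharge, so the wasted buffer alone inflates $\coston/\costoff$ above $2 + \gor$. In regime (ii), if $\delta \ll \gor f_1$, the adversary keeps sending transactions, forcing \on into $\Theta(A_t/\delta)$ additional recharges each costing $f_1$, while \off still needs only a single recharge of cost $f_1$. In regime (iii), the analogous tension appears for the rejection budget: if $\beta \gg 1$ the pre-recharge rejections are wasted against \off's single recharge, and if $\beta \ll 1$ the adversary terminates early enough that \on's speculative recharge is not amortised. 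Writing out the three ratios, they become simultaneously tight only at $\delta = \gor f_1$ and $\beta = 1$, and the golden-ratio identity $\gor (1 + \gor) = 1$ is exactly what forces the common value to be $2 + \gor$, matching the upper bound of \Cref{Theorem: rejection}.

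The main obstacle, and the reason a full proof is more delicate than this outline, is that a general deterministic \on need not commit to a single pair $(\beta, \delta)$: it may interleave rejections, recharges, and accepts in complicated, history-dependent patterns that cannot be summarised by two scalars. To handle this I would replace the pure-regime analysis by a charging argument, assigning each unit of \on's cost either to a unit of \off's cost or to one of the three regime-specific penalties, and then take a limit as $\eps \to 0$ to conclude that in every trajectory the total charge exceeds $(2 + \gor) \costoff(X_t) - o(1)$. Degenerate strategies, such as recharging immediately to a huge balance or never recharging at all, correspond to the boundary cases of the charging scheme and are already handled by the two-regime argument in \Cref{thm:acceptallopt}. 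Turning this plan into a rigorous proof is the content of the sketch in \Cref{app:rejection_sketch}.
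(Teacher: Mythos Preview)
Your proposal and the paper's treatment are both sketches rather than proofs, which is appropriate since the statement is a conjecture. The paper's sketch in \Cref{app:rejection_sketch} is considerably terser than yours: it argues only that \on must recharge at the same moment \off does (recharging later means \on pays rejection costs \off avoids; recharging sooner wastes funds), and that after this synchronised recharge \on is stuck with rejection costs it already paid while \off retroactively accepts those transactions, reducing the situation to the no-rejection case with an extra additive penalty. Your proposal reaches the same destination via a more explicit parameterisation: you isolate the pre-recharge rejection budget $\beta f_1$ and the over-lock buffer $\delta$, split into three regimes, and observe that the golden-ratio identity $\gor(1+\gor)=1$ pins the simultaneous optimum at $\delta=\gor f_1$, $\beta=1$. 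This is a genuine refinement of the paper's argument, since it makes quantitatively clear why the constant is exactly $2+\gor$ and mirrors the choice of $\delta$ in \Cref{Algorithm:reject}.

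Two remarks. First, your acknowledged obstacle---that a general deterministic \on need not commit to a single $(\beta,\delta)$---is precisely the gap the paper leaves open too, and your proposed charging scheme is no more concrete than the paper's appeal to ``the situation is similar as in the case without rejection''; neither sketch closes this. Second, your final sentence is circular: you write that turning your plan into a rigorous proof ``is the content of the sketch in \Cref{app:rejection_sketch},'' but that appendix contains only the three-paragraph heuristic just described, not a proof. If you intend your outline to \emph{replace} or \emph{expand} that appendix, say so; as written it defers to material that does not exist.
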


\section{Final step: Bidirectional transaction stream}\label{sec:two_parties}
In this section, we consider the most general problem setting, where for a given payment channel $(\ell, r)$, transactions stream along the channel $(\ell, r)$ in both directions.
A user $\ell$ (resp. $r$) can accept or reject incoming transactions that stream from left to right (resp. right to left).
Either user would incur a cost of $Rx + f_2$ for rejecting a transaction of size $x$.
$\ell$ does not need to take any action when encountering transactions that stream from right to left as they simply increase the balance of $\ell$ in the channel $(\ell, r)$.
Both users can also decide at any point to recharge their channel on-chain, or rebalance their channel off-chain. 

\subsection{Main algorithm}
Here we present our online algorithm for the bidirectional transaction stream setting. 
For simplicity, we assume that $R=0$ in the rejection cost. 
This means that the cost of rejecting a single transaction of size $x$ is simply $f_2$. 
Likewise, since we assume $R=0$, rebalancing an amount of $x$ off-chain now only incurs a cost of $C f_2$. 

The main algorithm \on is detailed in \Cref{Algorithm:logC-competitive-main}. 
It is run by both users on a payment channel.
In a nutshell, \on is composed of three algorithms: the first is a recharging algorithm to determine when and how much to recharge the channel on-chain.
The second algorithm (\Cref{Algorithm:decide_on_trx}) decides whether to accept or reject new transactions and when to perform off-chain rebalancing.
The last algorithm (\Cref{Algorithm:handle-fund}) describes how to store the funds received from the other user of the channel.

\begin{figure}
\centering
  \includegraphics[scale=0.5]{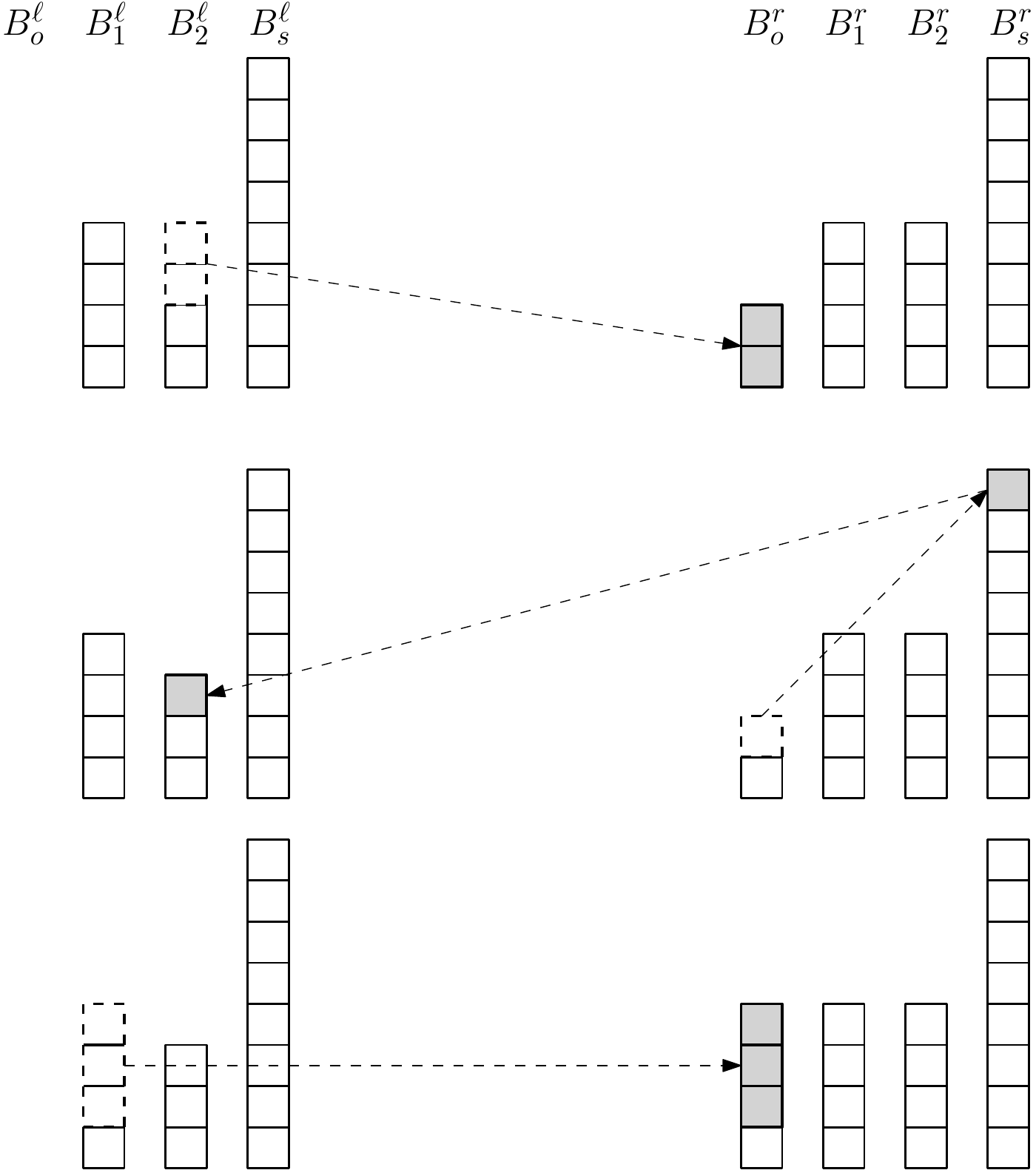}
  \caption{An example of how funds are transferred across a payment channel and how buckets are refilled.
  Both $\ell$ and $r$ start with full buckets.
  The first transaction is in the left-to-right direction and is transferred using funds from $B_2^{\ell}$ to $B_o^r$. The second transaction is in the right-to-left direction and is small, thus funds from $B^r_s$ are used.
  $B^r_s$ is immediately refilled using funds from $B^r_o$.
  The third transaction is in the left-to-right direction and uses funds from $B_1^{\ell}$.}
  \label{fig:bucket}
 \end{figure}
 
\subsubsection{$(4+ 2\slog, f_1)$-recharging}\label{subsub:rebalancing-tracking}
\on runs an on-chain recharging algorithm similar to \Cref{Algorithm:onlinetracker} (see \Cref{mainalg:oracle} and \Cref{mainalg:recharge} in \Cref{Algorithm:logC-competitive-main}) but with parameters $\gamma = 4+2\slog$ and $\delta = f_1$.
Since we are in the bidirectional transaction stream setting, \Call{Funds}{} returns the amount of funds \off has inside the entire channel (i.e. $\bal(\ell) +\bal(r)$) given a transaction sequence.

Let us look at the period between the on-chain recharging instances of \on.
From \Cref{mainalg:recharge} in \Cref{Algorithm:logC-competitive-main}, we know that \on ensures that it has more than $4+2\slog$ times more funds than \off locked in the channel.
These funds are distributed in the following way: \on initialises $\slog + 2 $ ``buckets" on each end of the channel.
We denote set of left-side buckets as $B^{\ell}$ and it consists of $B_s^{\ell} , B_1^{\ell}, \dots, B_{\slog}^{\ell}, B_o^{\ell}$.
Likewise, the set of right-side buckets is $B^r$ and it consists of $B_s^{r} , B_1^{r}, \dots, B_{\slog}^{r}, B_o^{r}$.

After recharging, users decide how to distribute funds in the channel, so the buckets $B_s^{\ell}$ and $B_s^r$ are filled with $2\tracker$ funds.
Buckets $B_o^{\ell}$ and $B_o^r$ are empty ($0$ funds).
Other buckets contain $\tracker$ funds.

Looking ahead, the funds in the $i$-th bucket on both sides are used to accept transactions $x$ with a size in the interval $\left[\frac{\tracker}{2^{i}}, \frac{\tracker}{2^{i-1}}\right)$.
The funds in $B_s$ are used to accept transactions with a size less than $\frac{\tracker}{C}$.
Finally, $B_o$ stores excess funds coming from payments from the other side when all other buckets are full.

\subsubsection{Transaction handling}
When a transaction arrives at the channel, based on the direction of the transaction, either $\ell$ or $r$ executes \Cref{Algorithm:decide_on_trx} to decide whether to accept the transaction.
Wlog let us assume $\ell$ encounters transaction $\ltr{x}$. 
If $\frac{\tracker}{2^i} < x \leq \frac{\tracker}{2^{i-1}}$ for some $i \in [\slog]$ and $B_i^{\ell}$ has sufficient funds, the funds from $B_i^{\ell}$ are used to accept the transaction.
If $B_i^{\ell}$ lacks sufficient funds for accepting $x$, $\ell$ rejects $x$.

Now, we consider the case where $x \le \frac{\tracker}{C}$.
If $B_s^{\ell}$ has sufficient funds, $\ell$ uses the funds from $B_s^{\ell}$ to accept $x$.
If $B_s^{\ell}$ has insufficient funds to accept $x$, $\ell$ performs off-chain rebalancing with an amount such that after deducting $x$ from $B_s^{\ell}$, there would still be $2\tracker$ funds left in $B_s^{\ell}$.
$\ell$ subsequently accepts $x$. 
The required funds for off-chain rebalancing are transferred from $B_o^{r}$ and $B_s^{r}$ (see \Cref{decidealg:bo} and \Cref{decidealg:bs} in \Cref{Algorithm:decide_on_trx}). 
Whenever $B_o^{\ell} > 0$ and some bucket in $B^{\ell}$ gets under its original capacity, funds are reallocated from $B_o^{\ell}$ to fill the bucket. 
\Cref{fig:bucket} depicts an example of how funds are used from different buckets to accept transactions.

\subsubsection{Handling funds coming from the other side}\label{subsubsec:handleComingTrx}
When a transaction $x$ is accepted by wlog $\ell$, \on calls \Cref{Algorithm:handle-fund} to distribute the transferred funds among $r$'s buckets in the following way: $r$ first uses $x$ to fill $B_s^r$ up to its capacity of $2\tracker$ (see \Cref{handlealg:bs} in \Cref{Algorithm:handle-fund}).
If there are still funds left, $r$ refills the $B_i^r$ buckets in descending order from $i=\slog$ to $i=1$. 
Intuitively, the reason why buckets are refilled in descending order is due to our simplified cost model for this problem where we assume the cost of rejection for any transaction is $f_2$.
Thus, rejecting three small transactions size $x$ costs thrice as much as rejecting a larger transaction of size $3x$. 
Finally, if there are still some funds left, they are added to $B_o^r$.


\RestyleAlgo{ruled}

\SetKwComment{Comment}{ }{ }
\SetKwInput{KwInput}{Input}
\SetKwInput{KwOutput}{Output}
\SetKwFunction{FDecide}{Decide}
\SetKwProg{Fn}{}{}{}

\begin{algorithm}[hbt!]
\caption{Decision on transaction}
  \label{Algorithm:decide_on_trx}
  \Fn{\Call{Decide}{$\tracker , x , B^{sdr}, B^{rcv}$}}{
 $Status \gets \mathtt{Accept}$\\
        \uIf{$ \frac{\tracker}{2^{i}} < x \leq \frac{\tracker}{2^{i-1}} \; and \; x \leq B_i^{sdr}$}{
         Accept $x$\\
        $ \nebu \gets \min(\tracker,B_i^{sdr}-x+B_o^{sdr})$ \\
        $B_o^{sdr} \gets \max(0,B_i^{sdr}-x+B_o^{sdr}-\tracker)$\\
        $B_i^{sdr} \gets \nebu$\\
}
    \uElseIf{$  x_i \leq \frac{\tracker}{C} \; and \; x \leq B_s^{sdr}$}{
        Accept $x$\\
        $\nebu \gets \min(2\tracker,B_s^{sdr}-x+B_o^{sdr})$\\ 
        $B_o^{sdr} \gets \max(0,B_s^{sdr}-x+B_o^{sdr}-2\tracker)$\\
        $B_s^{sdr} \gets \nebu$\\}
    \uElseIf{$  x_i \leq \frac{\tracker}{C} \; and \; x > B_s^{sdr}$}{
        Do off-chain rebalancing to fill $B_s$ and pay $f_2C$.\\
        $B_o^{rcv} \gets B_o^{rcv}-(2\tracker-B_s^{sdr}) $.\\ \label{decidealg:bo}
        $B_s^{rcv} \gets B_s^{rcv}-x $.\\ \label{decidealg:bs}
        Accept $x$\\
        $B_s^{sdr} \gets 2\tracker $.\\
        }
    \uElse{
    Reject x\\
    $Status \gets \mathtt{Reject}$}
    \KwRet $(B^{sdr}, B^{rcv}, Status)$
  }
\end{algorithm}

\RestyleAlgo{ruled}

\SetKwComment{Comment}{}{}
\SetKwInput{KwInput}{Input}
\SetKwInput{KwOutput}{Output}
\SetKwFunction{FHandle}{Handle\_funds}
\SetKwProg{Fn}{}{}{\KwRet}

\begin{algorithm}[hbt!]
\caption{Handling funds coming from the other side}

  \label{Algorithm:handle-fund}
  
  \Fn{\Call{HandleFunds}{$\tracker , x , B$}}{
        $\nebu \gets \min(2\tracker,B_s+x)$\\ \label{handlealg:bs}
        $x \gets \max(x+B_s-2\tracker,0)$\\
        $B_s \gets \nebu$\\
        \For{$i \in [\slog]$ in decreasing order}{
            \uIf{$x > 0$}
            {
              $\nebu \gets \min(\tracker,B_i+x)$\\
              $x \gets \max(x+B_i-\tracker,0)$\\
              $B_i \gets \nebu$\\
            }
        }
      $B_o \gets B_o + x$
    \KwRet $(B)$
  }
\end{algorithm}


\RestyleAlgo{ruled}

\SetKwFunction{FTracker}{Tracker}
\SetKwComment{Comment}{ }{ }
\SetKwInput{KwInput}{Input}
\SetKwInOut{KwInitialise}{Initialise}
\SetKwFunction{FDecide}{Decide}
\SetKwFunction{FTracker}{Tracker}
\SetKwFunction{FHandle}{Handle\_funds}

\begin{algorithm}[htb!]
\caption{Main algorithm }
  \label{Algorithm:logC-competitive-main}

\KwInitialise{left side buckets $B^{\ell}$}
\KwInitialise{right side buckets $B^{r}$}
\KwInitialise{tracker $\tracker, X \gets 0, \emptyset $}

\For{transaction $x$ in order of arrival}{
    concatenate $x$ to $X$\\
    $\tracker' \gets \Call{Funds}{X}$\\ \label{mainalg:oracle}
    \uIf{$\tracker' > \tracker$}{ 
        $\tracker \gets \tracker' + f_1$\\
        recharge to $2(2 + \slog)\tracker$ \label{mainalg:recharge}
    }
    $sdr,rcv \gets \ell , r$\\
    \uIf{$x$ is from right to left }{
     $sdr,rcv \gets r , \ell $
    }
        $ B^{sdr},B^{rcv}, Status \gets \Call{Decide}{\tracker ,x , B^{sdr},  B^{rcv}}$\\
        \uIf{$ Status == \mathtt{Accept} $}{
          $ B^{rcv} \gets \Call{HandleFunds}{\tracker , x , B^{rcv}}$
       }
    }
    
\end{algorithm}

\subsection{A $7 + 2\slog$ competitive online algorithm}
\begin{theorem}\label{thm:main}
  \Cref{Algorithm:logC-competitive-main} is $7 + 2\slog$ competitive.
\end{theorem}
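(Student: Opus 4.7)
The strategy is to split $\coston(X_t)$ into three additive pieces — on-chain recharging, transaction rejection, and off-chain rebalancing — and bound each against the offline lower bound $\costoff(X_t) \ge A_t + f_1$ from \Cref{Lemma: Money cost}. Lines \ref{mainalg:oracle}--\ref{mainalg:recharge} of \Cref{Algorithm:logC-competitive-main} execute $(\gamma,\delta)$-recharging with $\gamma = 4 + 2\slog$ and $\delta = f_1$, so \Cref{Lemma: Tracking} immediately gives a recharging cost of at most $(4 + 2\slog)(A_t + f_1) + f_1 \lceil A_t / f_1 \rceil \le (5 + 2\slog)(A_t + f_1)$. This already accounts for the $5 + 2\slog$ portion of the claimed ratio, and leaves exactly a budget of $2(A_t + f_1)$ to absorb rejection and rebalancing costs.

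The rejection cost I would handle by treating each of the $\slog$ non-small size classes separately. When \on rejects a transaction of size $x \in (\tracker/2^i, \tracker/2^{i-1}]$ on, say, the left side, bucket $B_i^{\ell}$ contains less than $x \le \tracker/2^{i-1}$, and the refill rule inside \Cref{Algorithm:decide_on_trx} (which tops $B_i^{\ell}$ back up from $B_o^{\ell}$ whenever possible) forces that at least $\tracker$ units of class-$i$ flow have been accepted through $B_i^{\ell}$ since it was last full. Because \off holds at most $A(X) \le \tracker$ funds at that time and any new funding costs $f_1$, a charging argument shows that each such bucket depletion on \on's side is matched either by $\Omega(\tracker)$ units of class-$i$ flow that \off cannot simultaneously sustain (so \off must itself reject a comparable number of transactions at $f_2$ each) or by an on-chain recharge of \off costing $f_1$. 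Summing the resulting charges across the $\slog$ size classes bounds the total rejection cost by $O(A_t + f_1)$.

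For off-chain rebalancing, each triggered event refills $B_s^{sdr}$ from a value below $\tracker/C$ to $2\tracker$ at cost $C f_2$, and this refill accommodates $\Omega(C)$ subsequent small transactions of size at most $\tracker/C$ before the next rebalancing is required, giving an amortized cost of $O(f_2)$ per small transaction. To charge this to \off, I would observe that the $\approx 2\tracker$ worth of small transactions accepted between consecutive rebalancings cannot all be absorbed by \off within its $A_t \le \tracker$ budget on the sending side, so \off must either reject a proportional number of them at $f_2$ each, perform a recharge of cost $f_1$, or rely on symmetric opposite-direction flow whose cost can be charged on the other side.

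The main obstacle is the bidirectional coupling: every acceptance refills buckets on the receiving side through \Cref{Algorithm:handle-fund}, and every rebalancing simultaneously drains $B_o^{rcv}$ and $B_s^{rcv}$ on the opposite side, so events cannot be charged purely locally. I expect to resolve this with a potential function $\Phi$ equal to the total deficit of the $2(\slog + 2)$ buckets with respect to their designed capacities (plus possibly a term for $B_o^{\ell} + B_o^{r}$), and to prove an amortized inequality showing that each rejection or rebalancing event either reduces $\Phi$ by at least its own cost or corresponds to a cost that \off must already incur on the same prefix $X_i$. Adding the resulting $2(A_t + f_1)$ bound on rejection plus rebalancing to the $(5 + 2\slog)(A_t + f_1)$ recharging bound and invoking $\costoff(X_t) \ge A_t + f_1$ yields $\coston(X_t) \le (7 + 2\slog)\,\costoff(X_t)$, as desired.
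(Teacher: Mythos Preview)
Your decomposition of the recharging cost via \Cref{Lemma: Tracking} and \Cref{Lemma: Money cost} is correct and matches the paper: recharging contributes at most $(5+2\slog)(A_t+f_1)\le(5+2\slog)\,\costoff(X_t)$. The problem is the target you set for the remaining piece. You aim to bound \on's rejection-plus-rebalancing cost by $2(A_t+f_1)$, but this is impossible in general. Consider $A_t$ fixed and send a long one-directional run of transactions each of size roughly $A_t$: after \off's single unit of balance is spent, \emph{both} \off and \on must reject essentially all of them, so \on's rejection cost grows like $N f_2$ for arbitrary $N$ while $A_t+f_1$ stays constant. The quantity you must compare against is $\costoff(X_t)$ itself (which includes \off's own rejection and rebalancing payments), not its lower bound $A_t+f_1$. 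The paper accordingly proves that between any two consecutive rechargings $t_1<t_2$ of \on, the rejection/rebalancing cost of \on is at most $2\bigl(\costoff(X_{t_2})-\costoff(X_{t_1})\bigr)$, which telescopes to $2\,\costoff(X_t)$.

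The mechanism the paper uses to get this factor $2$ is also quite different from your potential-function sketch. It fixes a strategy of \off, partitions the interval $[t_1,t_2]$ into \emph{left epochs} and \emph{right epochs} (a left epoch runs from the first moment some bucket in $B^\ell$ is non-full until $B_o^\ell>0$ again), and observes that the bidirectional coupling disappears inside an epoch: during a left epoch every right-to-left transaction is accepted, so only left-side rejections need to be charged. Within an epoch it then colours transactions \emph{red} (rejected by \on, accepted by \off) or \emph{blue} (accepted by \on, rejected by \off), lets $\rho_i,\beta_i$ count them in bucket $B_i$, and proves by induction on $j$ that $\sum_{k\le j}\rho_k\le 2\sum_{k\le j}\beta_k$. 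The induction exploits that any two transactions in the same bucket differ by at most a factor $2$, so two reds dominate one blue in value; a reassignment of surplus blues (and at most one red) to the next bucket keeps the pairing valid across buckets. Small transactions are handled separately by noting that between two rebalancings of $B_s$, \off must either rebalance as well or reject at least $C$ small transactions summing to $\ge\tracker$. Your bucket-depletion charging and the proposed deficit potential $\Phi$ do not obviously reproduce this pairing structure, and in particular your per-class argument does not account for the fact that the funds \off uses to accept a red transaction in class $i$ may have been freed up by a blue rejection in a \emph{different} class $j$; the paper's cross-bucket reassignment is exactly what handles this.
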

\begin{proof}
  We know that for any $i$, $\costoff(X_i) \ge \costoff(X_{i-1})$.
  From \Cref{Lemma: Tracking} and \Cref{Lemma: Money cost}, we know that cost of \on for recharging (in \Cref{Algorithm:logC-competitive-main}) is at most $(5 + 2 \slog)\costoff(X_t)$.
  Let $t_1$ and $t_2$ (with $t_2>t_1$) be the two consecutive times \on recharges, then we show that the cost of \on for rebalancing and rejection is smaller than $2(\costoff(X_{t_2}) - \costoff(X_{t_1}))$.
  Then $\coston(X_t) \le (7 + 2\slog)\costoff(X_t)$.

  For every strategy of \off and any two consecutive recharging times $t_1$ and $t_2$, we show that the rebalancing and rejection cost of \on between times $t_1$ and $t_2$ is at most twice that of \off as defined by the strategy.
  Having the strategy of \off, we split the time between rechargings even further, into epochs; we will show that the competitive ratio of $2$ holds for every epoch.

  The left epoch starts with the first transaction that makes some bucket in $B^{\ell}$ non-full (smaller than the original amount); the left epoch ends either before \on recharges, or $B_o^{\ell} > 0$.
  In a left epoch, every transaction from the right side is accepted; non-fullness of some buckets on one side means $B_o > 0$ on the other side.
  The right epoch is defined similarly, but since the epochs are disjoint, we can prove the statement for a left epoch only.

  For transactions below $\frac{\tracker}{C}$, we argue that the cost of \on is at most the cost of \off.
  \on accepts everything, so \on pays only for rebalancing.
  \off either rebalances too, in which case the cost is the same as \on; or it rejected some transactions.
  Since \on starts with $2\tracker$ funds in $B_s$ and refills the bucket with the highest priority, this means \off rejected some transactions summing to at least $\tracker$.
  There are at least $C$ of them, so \off's cost is also above $Cf_2$.
  If there is a counterexample containing a small transaction that \off rejects, then we can modify it to a counterexample where the transaction is increased to $\frac{\tracker}{C}$.
  So we can show the ratio in the case that no small transactions are coming.

  Now that we have the strategy for \off (decisions before rebalancing), we define some variables that track the competitive ratio. We will look at incoming transactions, and prove that the competitive ratio is always below $2$.
  We say that a transaction $x$ belongs to a bucket $B_i$ if $\frac{\tracker}{2^i} < x \le \frac{\tracker}{2^{i-1}}$.
  A transaction is red if it is rejected by \on and accepted by \off and it is blue if it is accepted by \on and rejected by \off.
  Let $\rho_i$ ($\beta_i$) be the number of red (blue) transactions in the bucket $B_i$.
  We can disregard transactions for which \on and \off make the same decision.
  If the transaction is rejected by both, it improves the ratio.
  If the transaction is accepted by both, it can be simulated by decreasing $\tracker$.

  We prove by induction that $\sum_{k \le j} \rho_k \le 2 \sum_{k\le j} \beta_k$ for all $j$.
  We show that if the equality holds and \off has enough funds to accept incoming transactions, \on accepts too.

  Let us examine $j = \slog$.
  We know that the ratio between any two transaction sizes in $B_{\slog}$ is less than $2$, so any two red transactions are bigger than one blue.
  Moreover, all funds that arrived from the right side were put into $B_{\slog}$ (if it is not full).
  So if $\rho_{\slog} = 2 \beta_{\slog}$, \on has at least the amount of funds in $B_{\slog}$ \off has.
  To continue the induction for buckets with smaller indices, we reassign some red or blue transactions to different buckets.
  If $\rho_{\slog} < 2 \beta_{\slog}$, we move at most one red and some blue transactions (in decreasing order of size) to $B_{\slog -1}$, stopping just before $\rho_{\slog} \ge 2 \beta_{\slog}$.

  For general $j$, we know that, due to the reassignment, in every bucket smaller than $j$, \on rejected exactly twice the number of transactions \off did.
  Moreover, \off needs to use at least the same amount of funds to accept red transactions compared to funds needed by \on to accept blue ones.
  Now, in the bucket $B_j$ holds $\rho_j = 2\beta_j$.
  Again, we pair every two red transactions to one blue, such that the sum of red is bigger than blue.
  Before the reassignment, the ratio between any two transactions is at most $2$.
  The reassignment (if occurred) moved at most one red and at least one blue that is smaller than any original transaction in the bucket, so we can pair the moved red to moved blue.
  In transactions in buckets in $j$ and bigger, \off used more funds than \on.
  Any funds that arrived from the right side were put into some bucket in $j$ or below, so if \off has enough funds to accept, \on has too.

  The same argument holds for a right-epoch, and we note that epochs are disjoint and cover the entire transaction sequence between times $t_1$ and $t_2$. 
  Since we chose the consecutive recharging times $t_1$ and $t_2$ arbitrarily, the rebalancing and rejection cost of \on between any two consecutive rechargings is at most twice that of \off within the same period. 
  Therefore, \Cref{Algorithm:logC-competitive-main} is $7 + 2\slog$ competitive.
\end{proof}

Moreover, observe that depending on the value of $C$, we can tweak the algorithm.
There can be a smaller number of buckets that accept transactions in a bigger range.

\subsection{Lower bound}
In this subsection, we show how much funds \on needs to lock in the channel to have a chance to be $c$ competitive.
We make the construction for $A$, the amount of funds that \off locked in the channel.
Observe that \off would rather reject transactions that have average size larger than $\frac{A}{C}$ than perform off-chain rebalancing to accept them.

\begin{lemma}\label{Lemma: Lower bound two party}
  For any $A$, if \on's cost for rejection is at most $c$ times \off's cost for rejection (for $c < \frac{\log C}{\log \log C}$),
  any deterministic \on needs to lock at least $\sigma = A \cdot \left(\frac{\frac{1}{c+1}\log C}{\log c+1}+1\right)$ funds in the channel.
\end{lemma}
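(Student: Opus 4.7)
The plan is to construct an adaptive adversarial sequence in phases of geometrically decreasing transaction sizes. Set $k = \lceil \log C / \log(c+1) \rceil$; in phase $i \in \{0,1,\ldots,k\}$, the adversary sends $(c+1)^i$ identical left-to-right transactions, each of size $A/(c+1)^i$. Every phase thus has total volume $A$, every individual transaction has size at least $A/C$, and no right-to-left transactions are sent. By the observation preceding the lemma, rejection (weakly) dominates off-chain rebalancing for both \on and \off on transactions of size at least $A/C$, so the relevant actions reduce to accept or reject. The adversary decides adaptively after each phase whether to terminate, based on \on's deterministic behavior so far.

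For any deterministic \on, let $p_j \in [0,1]$ denote the fraction of phase-$j$ transactions accepted; these are fixed by the online nature of \on before the adversary chooses whether to extend past phase $j$. If the adversary terminates after phase $i$, \off accepts phase $i$ in full (using its $A$ budget exactly) and rejects every earlier phase, paying rejection cost $f_2 \bigl((c+1)^i - 1\bigr)/c$. Requiring \on's rejection cost to be at most $c$ times \off's at every candidate stopping point $i$ gives $\sum_{j \le i}(1-p_j)(c+1)^j \le (c+1)^i - 1$, which rearranges to
\begin{equation*}
\sum_{j \le i} p_j (c+1)^j \;\ge\; \frac{(c+1)^i + c - 1}{c} \qquad \text{for every } i \in \{0, 1, \ldots, k\}.
\end{equation*}

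A short inductive calculation (subtracting consecutive constraints, or equivalently applying Abel summation to the nonnegative prefix sums) shows that any feasible $(p_j)$ satisfies $\sum_j p_j \ge 1 + k/(c+1)$, with equality attained at $p_0 = 1$ and $p_j = 1/(c+1)$ for $j \ge 1$. Since off-chain rebalancing only shifts funds within the channel without changing the total locked amount $\sigma = \bal(\ell) + \bal(r)$, while on-chain recharging is already counted in $\sigma$, \on's total spending on acceptances $A \sum_j p_j$ is bounded above by $\sigma$. Therefore $\sigma \ge A(1 + k/(c+1)) = A(1 + \log C / ((c+1)\log(c+1)))$, which is the claimed bound.

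The main obstacle will be justifying the reduction to the pure accept/reject LP. Specifically, I must verify that off-chain rebalancing never strictly benefits \on in this construction: the size bound $x \ge A/C$ makes the ratio of rebalancing cost $Cf_2$ to saved rejection cost at most $1$, so substituting any rebalancing with rejection is without loss. I also need to argue carefully that the total locked capital $\sigma$ must accommodate the cumulative accepted volume irrespective of how \on splits its capital between the two sides of the channel initially, since the only source of right-side funds is on-chain recharging (no right-to-left transactions are ever sent). Once those reductions are in place, the LP argument is immediate. The hypothesis $c < \log C/\log\log C$ guarantees $k \ge 1$, so the construction is nontrivial.
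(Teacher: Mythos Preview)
Your proposal is correct and uses essentially the same adversarial construction as the paper: phases $i=0,\ldots,k$ with $(c+1)^i$ left-to-right transactions of size $A/(c+1)^i$ each, and \off\ accepting only the last completed phase at each candidate stopping time. The paper computes the required acceptance volume by a direct greedy count (since sizes are decreasing, \on\ optimally postpones acceptances, and one then tallies how many must be accepted in each phase), whereas you recast the same computation as an LP in the acceptance fractions $p_j$ and solve it via summation by parts---these are equivalent presentations of one argument, yours arguably cleaner. The paper also wraps the construction in repeatable epochs (sending $A$ right-to-left to reset \off\ and iterate), which you omit; for the lemma as stated a single epoch already suffices, so this omission is harmless. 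One small slip: your ratio of rebalancing cost to saved rejection cost should read ``at least~$1$'' rather than ``at most~$1$'', but the intended substitution argument is sound.
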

\begin{proof}
  We describe an epoch: \off starts with $A$ funds left, then some transactions are sent from left to right and finally one transaction of size $A$ is sent right to left.
  If the funds in the channel of \on is smaller than $\sigma$, then the cost of \on is more than $c$ times that of \off.

  One epoch consists of at most $\frac{\log C}{\log c+1}+1$ phases.
  In phase $i$ (starting from $i=0$), there are $(c+1)^i$ transactions of size $\frac{A}{(c+1)^i}$.
  \off always accepts all transactions in the latest phase.
  If at the end of any phase, the cost of \on is more than $c$ times of \off, then a transaction of size $A$ is sent back and another epoch starts.
  Observe that for $c < \frac{\log C}{\log \log C}$, after rebalancing the epoch ends too (because the cost is $c$ times bigger).
  We can assume this cannot happen, \on does not perform off-chain rebalancing.

  We compute how much funds \on needs to stay within the competitive ratio until the last phase (where transactions of sizes $\frac{A}{C}$ are sent).
  After phase $i$, \off accepted $(c+1)^i$ transactions and rejected $\frac{(c+1)^i - 1}{c}$,
  so \off can reject up to $(c+1)^i - 1$ transactions among $\frac{(c+1)^{i+1}-1}{c}$.
  So \on has to accept at least $\frac{(c+1)^{i+1} - 1 - c(c+1)^{i} + c}{c} = (c+1) \frac{(c+1)^{i-1} - 1}{c}$ transactions.
  
  The size of transactions is decreasing, so optimally, \on accepts transactions when they are needed.
  So in phase $i+1$ it needs to accept $(c+1)\frac{(c+1)^i - 1}{c} - (c+1) \frac{(c+1)^{i-1} - 1}{c} = (c+1)^i$ transactions.
  Of course, it needs to accept the transaction in the phase $0$.

  The cost of transactions accepted by \on in phase $i>0$ is $(c+1)^{i-1} \frac{A}{(c+1)^i} = \frac{A}{c+1}$.
  To maintain the competitive ratio \on needs to accept transactions worth $A + A\frac{\log C}{\log c+1}$ in total.
  Independently of the cost of \off and \on before, there can be
  one epoch after another where the ratio is worse than $c$,
  so at the end, the ratio would be above $c$.
\end{proof}

\begin{theorem}
  There is no deterministic $c$-competitive algorithm for $c \in o(\sqrt{\log C})$.
\end{theorem}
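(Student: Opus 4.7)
The idea is to show that \Cref{Lemma: Lower bound two party} already forces \on to pay too much for its locked funds alone when $c$ is smaller than roughly $\sqrt{\log C}$. I will argue by contradiction: suppose a deterministic $c$-competitive \on exists with $c \in o(\sqrt{\log C})$. Then for $C$ sufficiently large we have $c < \frac{\log C}{\log\log C}$, so the hypothesis of \Cref{Lemma: Lower bound two party} is satisfied (being $c$-competitive overall forces, a fortiori, the rejection-cost ratio in each epoch to stay within $c$). The lemma consequently forces \on to lock at least
\[
\sigma \;\ge\; A\cdot \frac{\log C}{(c+1)\log(c+1)}
\]
funds on the adversarial epoched instance.

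The next step is to compare $\coston$ and $\costoff$ on that instance. Scaling the parameter $A$ so that $A \gg Cf_2$ and $A \gg f_1$, the offline cost is dominated by the $A$ funds \off locks: \off's per-epoch rejection cost is at most $O(Cf_2)$, no rebalancing is needed, and the reverse transactions of size $A$ refill \off's balance at no extra on-chain cost, giving $\costoff = \Theta(A)$. On the other hand, $\coston \ge \sigma$ simply because \on must lock $\sigma$ funds. Substituting both into the $c$-competitive inequality $\coston \le c \cdot \costoff$ yields, after cancellation of $A$, the necessary condition
\[
\log C \;=\; O\!\bigl(c^{2}\log c\bigr).
\]

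The final step is to contradict this with the hypothesis $c \in o(\sqrt{\log C})$: in this regime $c^{2} = o(\log C)$ and $\log c = O(\log\log C)$, so $c^{2}\log c = o(\log C)$ for sufficiently large $C$, contradicting $\log C = O(c^{2}\log c)$. This shows no such $c$-competitive algorithm can exist.

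The main obstacle I anticipate is the second step, namely showing cleanly that $\costoff = \Theta(A)$ in spite of \off rejecting up to $\frac{(c+1)^{i}-1}{c}$ transactions per epoch and receiving a large reverse transaction at the epoch boundary. I would handle this by (i) choosing $A$ large enough to dwarf the $O(Cf_2)$ rejection overhead of \off, (ii) noting that the reverse $A$-transaction is cost-free and merely rebalances the channel, and (iii) iterating epochs so that the locking cost $\sigma$ is amortized against a total offline cost that still remains $\Theta(A)$ per epoch.
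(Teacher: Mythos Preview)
Your plan follows the paper's route exactly: invoke \Cref{Lemma: Lower bound two party} to force \on to lock at least $\sigma\ge A\cdot\frac{\log C}{(c+1)\log(c+1)}$, observe that $\costoff=\Theta(A)$ on the epoched instance, and feed both into the competitive-ratio inequality to obtain $\log C=O(c^{2}\log c)$. The paper's three-sentence proof does the same, only without spelling out the $\costoff=\Theta(A)$ justification that you (rightly) flag and handle.

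There is one genuine slip, though, in your final step. From $c^{2}=o(\log C)$ and $\log c=O(\log\log C)$ you infer $c^{2}\log c=o(\log C)$, but $o(f)\cdot O(g)$ is $o(fg)$, not $o(f)$; all you get is $c^{2}\log c=o(\log C\cdot\log\log C)$. Concretely, take $c=\sqrt{\log C/\log\log C}$: this is $o(\sqrt{\log C})$, yet $c^{2}\log c=\Theta(\log C)$, so no contradiction with $\log C=O(c^{2}\log c)$ arises. The inequality you derive honestly yields only $c=\Omega\bigl(\sqrt{\log C/\log\log C}\bigr)$. The paper's own proof glosses over this same asymptotic step, so your plan is a faithful expansion of it; just be aware that the bound the argument actually delivers is a $\log\log C$ factor weaker than the theorem as stated.
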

\begin{proof}
  From \Cref{Lemma: Lower bound two party} for any $A$, \on needs $A \cdot (\frac{\frac{1}{c+1}\log C}{\log c+1}+1)$ funds
  to have its rejection cost $c$-competitive.
  But \on also needs to lock some funds in the channel.
  The total cost is then $c + A (\frac{\frac{1}{c+1}\log C}{\log c+1}+1)$, which is bigger than $\calO(\sqrt{\log C})$.
\end{proof}

\section{Evaluation}\label{sec:eval}

We now complement our theoretical worst-case analysis and study the performance of our algorithms empirically. 

\subsection{Methodology}

In order to study average case behavior, we consider the performance of our main algorithm \on (\Cref{Algorithm:logC-competitive-main}) on randomly generated transaction sequences in the general bidirectional transaction stream setting.
Our benchmark is the performance of the optimal offline algorithm \off, however, since computing optimal solutions is NP-hard~\cite{oracle}, computing the achieved competitive ratio in simulations is computationally expensive. 
In order to have a baseline nevertheless, we use dynamic programming to exactly and efficiently compute the 
cost of \off for sequences of reasonable lengths. 
For ease of presentation, we defer this algorithm to \Cref{app:dp}.
We additionally conduct experiments using actual Lightning Network data, and considering realistic parameters.

\subsection{Comparison of \on to \off}

\paragraph{Average performance of \on}
We first report on the competitive ratio our online algorithm achieves under random transaction sequences. 
We begin by sampling $50$ random transaction sequences of length $50$ each. 
In each sequence, transaction sizes are first sampled independently from the folded normal distribution with mean $0$ and standard deviation $3$, and then we sample the direction of the transaction (left-to-right or right-to-left) uniformly at random. 
Finally, we quantise the size of the transaction to the closest integer. 
We then run both \off and \on on the generated sequences and compute the average of the following:
\begin{enumerate}
    \item cost (sum of rejection/off-chain rebalancing/on-chain recharging costs)
    \item sum of funds locked in the channel ($A(X)$)
    \item acceptance rate (fraction of transactions accepted)
    \item funds that are transferred across the channel by off-chain rebalancing per sequence
    \item number of on-chain rechargings per sequence 
\end{enumerate}

We present our results in \Cref{table:off-on-comparision}. 
As we can see from the cost of \on vs \off in \Cref{table:off-on-comparision}, the competitive ratio is generally significantly lower than the  $7 + 2\slog$ bound as suggested by our conservative worst-case analysis in \Cref{thm:main}. 

\paragraph{Heuristics to improve the performance of \on}
However, we notice in our experiments that \on seems to overcharge the channel.
This is most noticeable when we observe the effect of $C$ on the performance of \on. 
From \Cref{table:off-on-comparision}, increasing $C$ in a range of medium (not too small) values does not change \off's cost noticeably. 
In contrast, both the average cost and total amount of locked funds of \on grows with $C$. 
This is due to the fact that \on uses $(4 +2 \slog, f_1)$-recharging to ensure that it always has significantly more funds than \off, even though a big fraction of these funds remain unspent.
\on is also limited by the fact that it does not borrow funds from other buckets when a bucket is depleted.
For instance, \on always charges $B_s$ to $2\tracker$ and only uses these funds to accept transactions that are smaller than $\frac{\tracker}{C}$.
Thus, as $C$ increases, the number of transactions that fall into the $B_s$ bucket decreases and the funds in $B_s$ remain unspent.

These observations motivate us to design a less pessimistic version of \on that we expect will perform better than \on. 
We introduce \oni which is a slightly altered version of \on: \oni follows the $(\slog, f_1)$-recharging algorithm and does not divide the funds into separate buckets. 
Instead, \oni accepts all the transactions smaller than $\tracker$ as long as it has funds.
Otherwise, if the transaction is small ($< \frac{\tracker}{C}$) it off-chain rebalances to fill the bucket and accepts the transaction.
Similar to \on, \oni rejects a transaction if it is larger than $\frac{\tracker}{C}$ . 

Our empirical results in \Cref{table:off-on-comparision} confirms that the average cost of \oni is significantly smaller than \on.
The acceptance rate of \oni is slightly smaller than \on, which is expected as \oni does not have separate funds for each range of transactions (as defined by the buckets), and as a result might miss some transactions. 
We observe that \oni performs more off-chain rebalancings compared to \on on average because \oni does not reserve separate funds for small transactions. 
However, one issue with both \on and \oni is that they recharge the channel too often (as soon as $\tracker < A(X_t)$). 
As can be seen from \Cref{table:off-on-comparision} ($f_2=2$), both algorithms perform more than $5$ rechargings on average for transaction sequences of length $50$. 
This increases the cost of both algorithms significantly as each recharging instance incurs a cost of at least $f_1$. 

We thus design another version of \oni to address the aforementioned problem. 
\onii works exactly as \oni except that it does not recharge the channel as frequently as \oni does. 
\onii only recharges the channel if  $\alpha \cdot \tracker < A(X)$, where $\alpha > 1$ is some constant that controls the how often the algorithm recharges the channel and can be fine-tuned empirically based on $f_1$ and $f_2$. 
If we set $\alpha = 1$, \onii becomes equivalent to \oni and has higher acceptance rate.
This is favorable when $f_2$ is large and $f_1$ is small. 
Conversely, by increasing $\alpha$, \onii recharges the channel less frequently but the acceptance rate falls.
This is favorable when $f_1$ is large and $f_2$ is small. 
In our experiments, we observe that for the case $f_2=2,C=2$, when all the other parameters are as \Cref{table:off-on-comparision}, $\alpha = 2$ yields the lowest average cost.
Thus, in our evaluation of \onii, we use $\alpha=2$ and from \Cref{table:off-on-comparision} we note that this choice of $\alpha$ halves the number of rechargings compared to \oni, which consequently leads to lower average cost.
Additionally, we note that the total amount of funds in the channel of \onii is close to \off. 

\paragraph{Varying the distribution of the generated sequences}
We also evaluate how the performance of our algorithms is affected by the variance of the transaction size.
We sample $50$ sequences each of length $50$ with each transaction in the sequence independently sampled from the folded normal distribution with mean 0 and standard deviation $\sigma$, for a range of $\sigma$ values across $[3,20]$.
The transactions are quantised to the closest integer, and the direction of each transaction is sampled uniformly at random.
We then observe the cost of \on, \oni,\onii and \off. 
As can be seen from in \Cref{fig:costVSstd}, the cost of all $4$ algorithms rises as $\sigma$ increases. 
This is due to the fact that increasing the variance of the sampled transactions reduces the probability of getting a similarly sized transaction coming from the other side, thus increasing the speed at which the balance on one side gets depleted.
We note, however, that \Cref{fig:costVSstd} shows that even for large values of $\sigma$, \on's average cost remains a lot smaller than the worst case upper bound of $7 + 2\slog$.
We also observe that the cost of \oni and \onii is noticeably smaller than \on and grows at a much slower pace with respect to increases in $\sigma$.

Another factor we look at is how the asymmetry of the transaction flow along a channel can affect the performance of our algorithms.
To do so, we generate $50$ sequences each of length $50$, and sample the size of each transaction from a folded normal distribution with mean $0$ and standard deviation $3$, and then quantise the transaction to the nearest integer. 
We then sample the direction of the transactions according to a Bernoulli distribution with parameter $p$, where $p$ represents the probability of sampling a left-to-right transaction. 
We see from \Cref{fig:costVSp} that the cost of all algorithms decrease as $p$ increases from $0$ to $0.5$.
As $p$ increases from $0.5$ to $1$, the cost function increases again. 
This conforms to our intuition that extremely asymmetric sequences are harder to handle as the lack of sufficiently many transactions from one side just increases the speed at which the balance on the other side gets depleted.
We observe that both \oni and \onii nevertheless perform comparatively better than \on when given these asymmetric transaction sequences (see \Cref{fig:costVSp}).

\begin{table*}[htb!]
\centering

\begin{tabular}{||p{0.3cm} |p{0.2cm}|| p{1.2cm}| p{0.9cm}| p{1.1cm}| p{1.6cm}| p{1.9cm}|| p{1.2cm}| p{0.9cm}| p{1.1cm}| p{1.6cm} | p{1.9cm} ||} 
 \hline
  \multicolumn{2}{||c||}{Param} & \multicolumn{5}{|c||}{\off} & \multicolumn{5}{|c||}{\on} \\
 \hline\hline
 C & $f_2$ & Cost & $A(X)$ & Accept rate&  Off-chain rebalancing &  Rechargings &  Cost & $A(X)$ & Accept rate & Off-chain rebalancing &  Rechargings  \\ [0.5ex] 
 \hline
 $2$  & $0.5$  & $\textbf{15.02}$ & $6.4$ & $0.78$ & $0.8$ & $1$ & $\textbf{63.3}$ & $44.26$ & $0.50$ & $0.9$ & $2.18$ \\ 
 \hline
 $8$  & $0.5$ & $\textbf{15.21}$  & $6.38$ & $0.77$ & $0$ & $1$ & $\textbf{87.79}$  & $69.06$ & $0.50$ & $0$ & $2.04$\\
 \hline
 $2$  & $2$ & $\textbf{23.6}$  & $14.26$ & $0.95$ & $5.36$ & $1$ & $\textbf{127.02}$  & $100.2$ & $0.91$ & $0.38$ & $5.86$ \\ 
 \hline
 $8$  & $2$ & $\textbf{24.5}$  & $13.9$ & $0.92$ & $0$ & $1$&  $\textbf{184.32}$  & $156.6$ & $0.9$ & $0$ & $5.84$ \\ 
 \hline
\end{tabular}
\\
\begin{tabular}{||p{0.3cm} |p{0.2cm}|| p{1.2cm}| p{0.9cm}| p{1.1cm}| p{1.6cm}| p{1.9cm}|| p{1.2cm}| p{0.9cm}| p{1.1cm}| p{1.6cm} | p{1.9cm} ||}  
 \hline
  \multicolumn{2}{||c||}{Param} & \multicolumn{5}{|c||}{\oni} & \multicolumn{5}{|c||}{\onii} \\
 \hline\hline
 C & $f_2$ & Cost & $A(X)$ & Accept rate& Off-chain rebalancing &  Rechargings & Cost & $A(X)$ & Accept rate &  Off-chain rebalancing  &  Rechargings  \\ [0.5ex] 
 \hline
 $2$  & $0.5$  & $\textbf{30.74}$ & $6.86$ & $0.44$ & $11.18$  & $2.18$ & $\textbf{26.52}$ & $5.56$ & $0.41$ & $10.22$ & $1.2$ \\ 
 \hline
 $8$  & $0.5$ & $\textbf{39.98}$  & $19.86$ & $0.48$ & $0$  & $2.04$ & $\textbf{32.94}$  & $15.9$ & $0.46$ & $0$ & $1.18$ \\
 \hline
 $2$  & $2$ & $\textbf{66.16}$  & $14.42$ & $0.84$ & $25.48$ & $5.86$ & $\textbf{60.38}$  & $11.3$ & $0.78$ & $27.16$ & $2.2$ \\ 
 \hline
 $8$  & $2$ & $\textbf{81.35}$  & $42.72$ & $0.89$ & $1.5$ & $5.84$ &  $\textbf{58.38}$  & $33.3$ & $0.83$ & $1.56$ & $2.16$ \\

 \hline
\end{tabular}
\caption{Comparison between the performance of \off, \on, \oni and \onii on randomly generated transaction streams. 
The result is averaged over $50$ sequences each of length $50$. 
The size of each transaction is independently sampled from the folded normal distribution with mean $0$ and standard deviation $3$, then quantised to the closest integer. 
We set $f_1=3$ and $R=0$. 
$A(X)$ is the total amount of funds in the channel from recharging the channel. 
``Accept rate" shows the average fraction of transactions that were accepted. 
``Off-chain rebalancing" shows how much funds on average was moved along the channel using off-chain rebalancing. 
``Rechargings" shows the average number of rechargings performed. 
Note that since \off knows the entire sequence in advance, it only recharges the channel once at the beginning of each sequence.
}
\label{table:off-on-comparision}
\end{table*}

 \begin{figure}
\centering
  \includegraphics[scale=0.7]{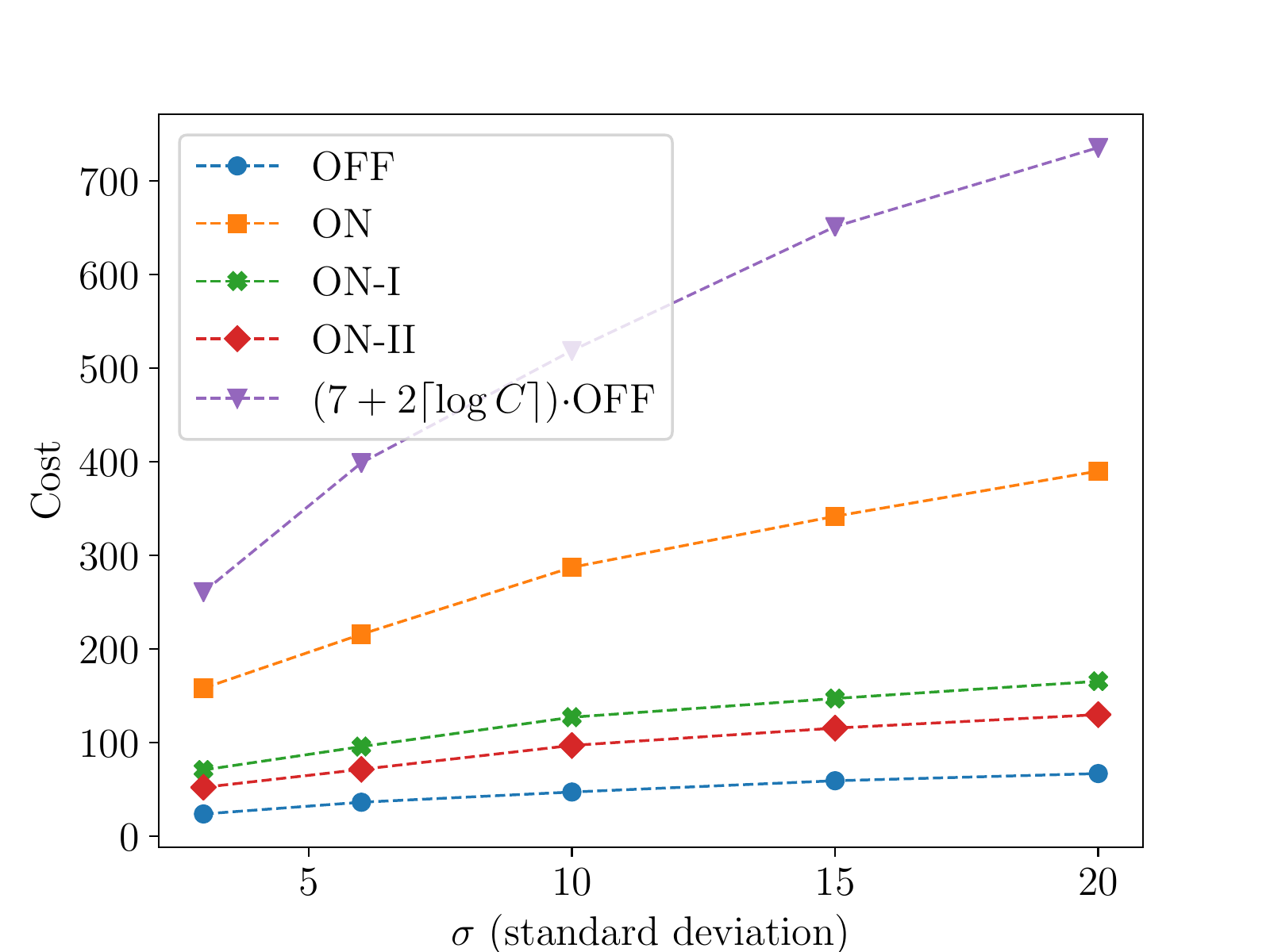}
  \caption{Average cost of our algorithms over $50$ randomly generated transaction streams, each of length $50$. The size of each transaction is independently sampled from the folded normal distribution with mean $0$ and standard deviation $\sigma$. We use the parameters $f_1 = 3 , f_2 = 2 , R = 0 , C = 4 , \alpha=2$.}
  \label{fig:costVSstd}
 \end{figure}

\begin{figure}
\centering
  \includegraphics[scale=0.7]{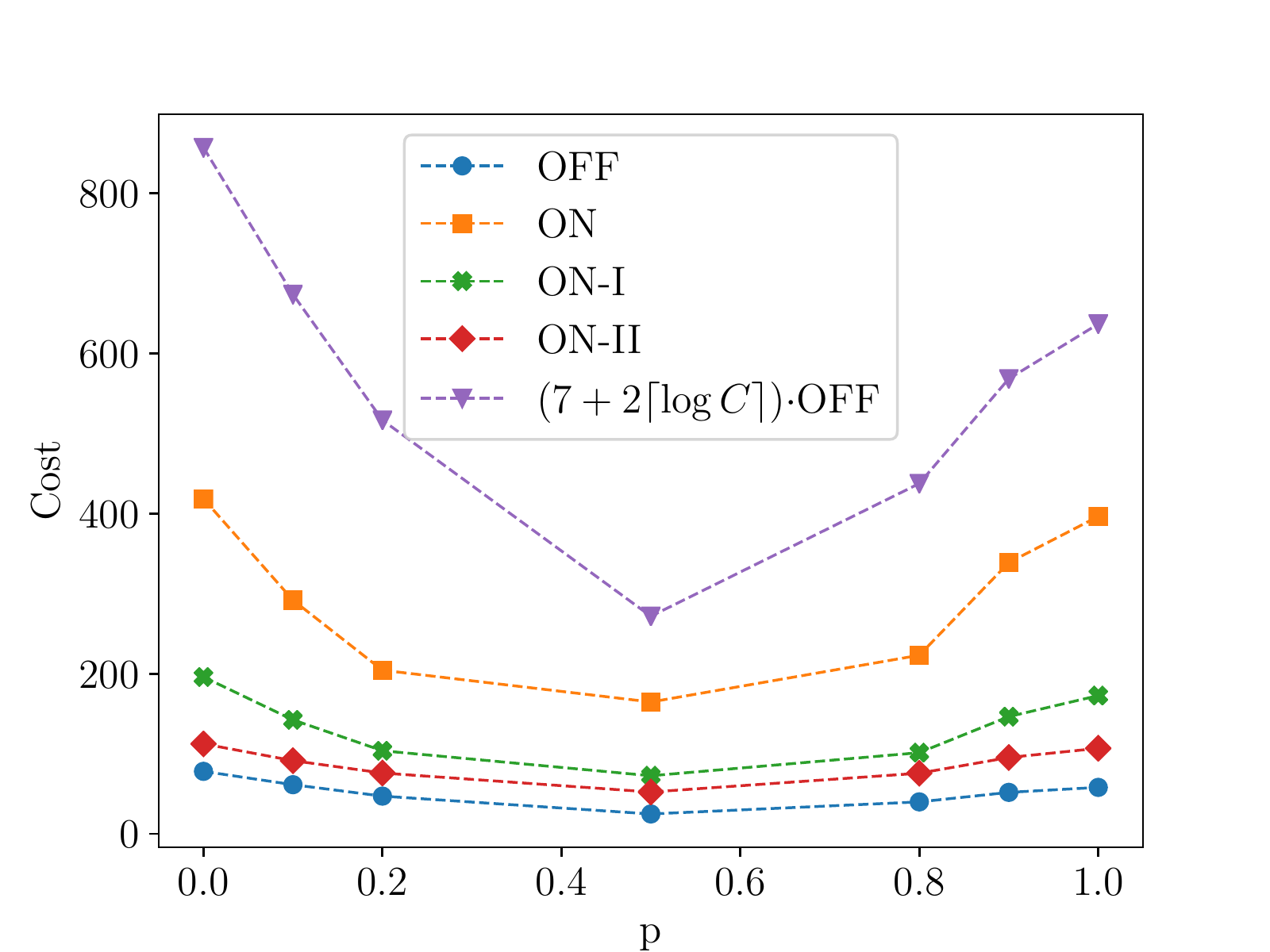}
  \caption{Average cost of our algorithms over $50$ randomly generated transaction streams each of length $50$. The size of each transaction is sampled from the folded normal distribution with mean $0$ and standard deviation $3$. $p$ is the probability of sampling a left-to-right transaction. We use the parameters $f_1 = 3 , f_2 = 2 , R = 0 , C = 4 , \alpha = 2$.}
  \label{fig:costVSp}
 \end{figure}

\subsection{Case study: Lightning Network}\label{sec:cycle_length_estimation}
We conclude our evaluation section with a case study of the Lightning network. 
We first run our experiments with realistic parameters taken from Lightning Network data.
In the Lightning Network, $f_1$ is the on-chain transaction fee (roughly around $1000$ satoshi) which is a lot larger than $f_2$, the base fee one receives when forwarding a payment (around $1$ satoshi).
Thus, when we evaluate our algorithms using similar parameters for short transaction sequences, we observe that \off rejects all transactions and, in fact, will not even open a channel as it would incur a larger cost as compared to simply rejecting transactions. 

Next, we conduct an empirical estimation of the average rebalancing cycle length in the Lightning Network using the latest snapshot (September 2021) of the Lightning Network from the Lightning Network gossip repository~\cite{lngossip}.
This is important as $C$ is an integral component of the competitive ratio in our main theorem, \Cref{thm:main}.
The Lightning Network snapshot contains $117,894$ channels, including pairs of users with more than one channel between them. 
After merging these channel pairs, we get $63,820$ channels. 
We compute the length of the shortest cycle containing both $\ell$ and $r$ for every pair of channel holders $(\ell,r)$.
\Cref{table:cycle} shows the frequency of each cycle length (including the channels that are not part of any cycle at all).
From \Cref{table:cycle}, we see that a large fraction of channel holders are part of cycles of length at least $4$. 
We note, however, that actual cycles might be longer as precise balance information is hidden in the Lightning Network snapshot. 

\begin{table}[htb!]
\centering
  \begin{tabular}{||c||c|c|c|c|c||} 
 \hline \hline
 \textbf{Cycle length} & $4$  & $5$ & $6$ & $7$ & N.A. \\
 \hline
 \textbf{Frequency} &  $49,424$ & $7,758$ & $469$ & $12$ & $6,157$\\
 \hline \hline
\end{tabular}

\caption{Frequency of the length of shortest cycle between all users in the Lightning Network. The last column shows the frequency of channels that are not part of any cycle. The average cycle length is 4.15}
\label{table:cycle}
\end{table}

\section{Conclusion}\label{sec:conclusion}
This paper presents competitive strategies to maintain minimise cost while maximising liquidity and transaction throughput in a payment channel.
Our algorithms come with formal worst-case guarantees, and also perform well in realistic scenarios in simulations. 

We believe that our work opens several interesting avenues for future research.
On the theoretical front, it would be interesting to close the gap in the achievable competitive ratio, and to explore the implications of our approach on other classic online admission control problems.
Furthermore, while in our work we have focused on deterministic algorithms, it would be interesting to study the power of randomised approaches in this context, or to consider different adversarial models.

\newpage
\bibliographystyle{acm}

\nocite{*} 

\begin{thebibliography}{10}

\bibitem{raiden}
Raiden network.
\newblock \url{https://raiden.network/}, 2017.

\bibitem{aspnes1997line}
{\sc Aspnes, J., Azar, Y., Fiat, A., Plotkin, S., and Waarts, O.}
\newblock On-line routing of virtual circuits with applications to load
  balancing and machine scheduling.
\newblock {\em Journal of the ACM (JACM) 44}, 3 (1997), 486--504.

\bibitem{AvarikiotiBWW2019}
{\sc Avarikioti, G., Besic, K., Wang, Y., and Wattenhofer, R.}
\newblock Online payment network design.
\newblock {\em CoRR abs/1908.00432\/} (2019).

\bibitem{AvarikiotiH0W20}
{\sc Avarikioti, Z., Heimbach, L., Wang, Y., and Wattenhofer, R.}
\newblock Ride the lightning: The game theory of payment channels.
\newblock In {\em Financial Cryptography and Data Security - 24th International
  Conference, {FC} 2020, Kota Kinabalu, Malaysia, February 10-14, 2020 Revised
  Selected Papers\/} (2020), J.~Bonneau and N.~Heninger, Eds., vol.~12059 of
  {\em Lecture Notes in Computer Science}, Springer, pp.~264--283.

\bibitem{hidenseek}
{\sc Avarikioti, Z., Pietrzak, K., Salem, I., Schmid, S., Tiwari, S., and Yeo,
  M.}
\newblock {HIDE} {\&} {SEEK:} privacy-preserving rebalancing on payment channel
  networks.
\newblock {\em CoRR abs/2110.08848\/} (2021).

\bibitem{borodin2005online}
{\sc Borodin, A., and El-Yaniv, R.}
\newblock {\em Online computation and competitive analysis}.
\newblock cambridge university press, 2005.

\bibitem{lngossip}
{\sc Decker, C.}
\newblock Lightning network research; topology, datasets.
\newblock \url{https://github.com/lnresearch/topology}.
\newblock Accessed: 2022-04-01.

\bibitem{ersoy2019profit}
{\sc Ersoy, O., Roos, S., and Erkin, Z.}
\newblock How to profit from payments channels, 2019.

\bibitem{FazliNS2021}
{\sc Fazli, M., Nehzati, S.~M., and Salarkia, M.}
\newblock Building stable off-chain payment networks.
\newblock {\em CoRR abs/2107.03367\/} (2021).

\bibitem{KhalilG17}
{\sc Khalil, R., and Gervais, A.}
\newblock Revive: Rebalancing off-blockchain payment networks.
\newblock {\em {IACR} Cryptol. ePrint Arch.\/} (2017), 823.

\bibitem{LiMZ20}
{\sc Li, P., Miyazaki, T., and Zhou, W.}
\newblock Secure balance planning of off-blockchain payment channel networks.
\newblock In {\em 39th {IEEE} Conference on Computer Communications, {INFOCOM}
  2020, Toronto, ON, Canada, July 6-9, 2020\/} (2020), {IEEE}, pp.~1728--1737.

\bibitem{lukovszki2015online}
{\sc Lukovszki, T., and Schmid, S.}
\newblock Online admission control and embedding of service chains.
\newblock In {\em International Colloquium on Structural Information and
  Communication Complexity\/} (2015), Springer, pp.~104--118.

\bibitem{MalavoltaMKM17}
{\sc Malavolta, G., Moreno{-}Sanchez, P., Kate, A., and Maffei, M.}
\newblock Silentwhispers: Enforcing security and privacy in decentralized
  credit networks.
\newblock In {\em 24th Annual Network and Distributed System Security
  Symposium, {NDSS} 2017, San Diego, California, USA, February 26 - March 1,
  2017\/} (2017), The Internet Society.

\bibitem{PickhardtN20}
{\sc Pickhardt, R., and Nowostawski, M.}
\newblock Imbalance measure and proactive channel rebalancing algorithm for the
  lightning network.
\newblock In {\em {IEEE} International Conference on Blockchain and
  Cryptocurrency, {ICBC} 2020, Toronto, ON, Canada, May 2-6, 2020\/} (2020),
  {IEEE}, pp.~1--5.

\bibitem{poon2015lightning}
{\sc Poon, J., and Dryja, T.}
\newblock The bitcoin lightning network: Scalable off-chain instant payments.
\newblock \url{https://lightning.network/lightning-network-paper.pdf}, 2015.

\bibitem{Prihodko2016FlareA}
{\sc Prihodko, P., Zhigulin, S.~N., Sahno, M., Ostrovskiy, A.~B., and
  Osuntokun, O.}
\newblock Flare : An approach to routing in lightning network white paper.

\bibitem{RoosMKG18}
{\sc Roos, S., Moreno{-}Sanchez, P., Kate, A., and Goldberg, I.}
\newblock Settling payments fast and private: Efficient decentralized routing
  for path-based transactions.
\newblock In {\em 25th Annual Network and Distributed System Security
  Symposium, {NDSS} 2018, San Diego, California, USA, February 18-21, 2018\/}
  (2018), The Internet Society.

\bibitem{oracle}
{\sc Schmid, S., Svoboda, J., and Yeo, M.}
\newblock Weighted packet selection for rechargeable links: Complexity and
  approximation, 2022.

\bibitem{SivaramanVRNYMF20}
{\sc Sivaraman, V., Venkatakrishnan, S.~B., Ruan, K., Negi, P., Yang, L.,
  Mittal, R., Fanti, G.~C., and Alizadeh, M.}
\newblock High throughput cryptocurrency routing in payment channel networks.
\newblock In {\em 17th {USENIX} Symposium on Networked Systems Design and
  Implementation, {NSDI} 2020, Santa Clara, CA, USA, February 25-27, 2020\/}
  (2020), R.~Bhagwan and G.~Porter, Eds., {USENIX} Association, pp.~777--796.

\bibitem{Engelshoven2021TheMA}
{\sc van Engelshoven, Y., and Roos, S.}
\newblock The merchant: Avoiding payment channel depletion through incentives.
\newblock {\em 2021 IEEE International Conference on Decentralized Applications
  and Infrastructures (DAPPS)\/} (2021), 59--68.

\end{thebibliography}

\balance
\newpage
\appendix

\section{Optimality of deterministic algorithms in a unidirectional stream with rejection}\label{app:rejection_sketch}

Sketch proof of Conjecture \ref{conjecture:rejection}.

\begin{proof}[Sketch]
  The best \on algorithm needs to recharge the channel when \off does.
  If it recharges the channel later, it incurs cost that \off is not incurring, so the competitive ratio worsens.
  If it recharges sooner, there exists a sequence that either forces \off to waste funds or incur a big cost.
  
  After \off recharges, it can reconsider and accept previously rejected transactions,
  but \on needs to reject them.
  Now, the situation is similar as in the case without rejection.
  \on needs to recharge, but it already paid for some rejections whereas \off pays only for recharging and accepting very small transactions.
  
  \on disadvantaged in this way cannot achieve a better competitive ratio than $2 + \frac{\sqrt{5}-1}{2}$
\end{proof}

\section{Computing the cost of \off using dynamic programming}\label{app:dp}
In this section, we describe a dynamic programming algorithm \on that solves the main problem.
We assume that the size of transactions is integer (moreover the sum of transactions should be small).

Let $\textsc{Cost}_{i}(F_{\ell} , F_r)$ be the minimum cost for rejecting and off-chain rebalancing in processing sequence $X_i$ that ends with $b(\ell) = F_{\ell}$ and $b(r) = F_r$
(For values of $F_{\ell}$ and $F_r$ smaller that $0$, we define it to be $\infty$).
$\textsc{Cost}_{i}(F_{\ell} , F_r)$ can be derived from $\textsc{Cost}_{i-1}$ given the decision on the $i$'th transaction.

Let us assume wlog that $i$'th transaction is from $\ell$ to $r$.
\off has three choices when encountering $x_i$.
The first option is to reject $x_i$, the the cost is $A_1 = \textsc{Cost}_{i-1}(F_{\ell} , F_r) + Rx_i+f_2$.
The second option is to accept $x_i$ which gives cost $A_2 = \textsc{Cost}_{i-1}(F_{\ell}+x_i , F_r-x_i)$.
The last option is to off-chain rebalance before $x_i$ and then accept $x_i$.
Note that any off-chain rebalancing before rejecting or accepting (while having enough funds) can be postponed.
This gives cost $A_3=\min_{a \le F_{\ell} + x_i}  \textsc{Cost}_{i-1}(a, F_r+ F_{\ell} - a) + C\cdot R(F_{\ell} + x_i - a)+Cf_2$.
\off chooses the best option, that means $\textsc{Cost}_{i}(F_{\ell}, F_r)=\min{\{A_1,A_2,A_3\}}$

We handle right to left transaction in the same way.

Given the previous, \on computes $\textsc{Cost}_{t}$ for all valid pairs $(F_{\ell} , F_r)$ and the final cost is
\[
  \costoff(X_i)  = \min_{F_{\ell},F_r}  \textsc{Cost}_{t}(F_{\ell},F_{r}) + F_{\ell} + F_{r} + f_1
\]

To bound the time complexity of \on, we observe some bounds for $S = F_{\ell}^* + F_r^*$, where $F_{\ell}^*$ and $F_r^*$ are the values of $F_{\ell}$ and $F_r$ achieving the minimal cost.
Observe that $S \le \sum_{i\le t} x_i$, it is not worth to have more money than the sum of the trasactions.
We can strenghten the inequality and instead of $\sum_{i\le t}x_i$, we can compute the minimal amount needed to accept every transaction.
The other option is to reject everything, so we know that $f_1 + S \le \sum_{i\le t} Rx_i + f_2$.

Now we can prove the theorem about the described algorithm \on.
\begin{theorem}
  \on computes the optimal cost in time $\calO(tS^3)$, where $S$ is the bound on the maximal funds in the channel and $t$ the number of transactions.
\end{theorem}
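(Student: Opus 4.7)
The plan is to establish correctness of the recurrence first, then bound the number of DP entries and the work per entry.

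\textbf{Correctness.} I would argue by induction on $i$ that $\textsc{Cost}_{i}(F_\ell, F_r)$ equals the minimum rejection-plus-rebalancing cost of any strategy that processes the prefix $X_i$ and ends with $\bal(\ell) = F_\ell$, $\bal(r) = F_r$. The base case $i=0$ is immediate since only $(F_\ell, F_r)$ matching the initial balances have zero cost. For the induction step on transaction $x_i$ (say left-to-right), I need to show that any optimal strategy falls into exactly one of the three cases enumerated in the excerpt: (i) reject $x_i$, (ii) accept $x_i$ without rebalancing, or (iii) rebalance first, then accept. The main small observation to invoke here is the ``postponement'' claim stated in the excerpt: any off-chain rebalancing that is not strictly necessary to enable an acceptance can be delayed to the moment just before an accept without changing feasibility or cost. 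With that in hand, the minimization defining $A_3$ over the pre-transaction split $a \le F_\ell + x_i$ ranges over all ways of distributing total channel funds $F_\ell + F_r$ between the two sides just before accepting $x_i$, and the additive term $C\cdot R(F_\ell + x_i - a) + Cf_2$ is exactly the rebalancing cost to shift $F_\ell + x_i - a$ units to the left side. The right-to-left case is symmetric. Taking the minimum over the three options yields the recurrence.

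\textbf{Final answer.} Once $\textsc{Cost}_t$ is filled in, the total cost is obtained by adding the on-chain recharging cost $F_\ell + F_r + f_1$ incurred at the start for any nonzero initial funds, and minimizing over $(F_\ell, F_r)$. I would also handle the degenerate case where it is optimal never to open the channel, which corresponds to the minimum over the ``reject everything'' strategy; since rejecting all transactions is captured by the recurrence with $F_\ell = F_r = 0$ throughout (and the $f_1$ term should not be paid in that case, which can be handled by a separate comparison).

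\textbf{Running time.} The state space is indexed by $i \in [t]$ and pairs $(F_\ell, F_r)$ with $F_\ell + F_r \le S$, giving $\calO(t \cdot S^2)$ entries. Computing $A_1$ and $A_2$ from already-filled entries of $\textsc{Cost}_{i-1}$ is $\calO(1)$. The only non-trivial work is $A_3$, which is a minimum over $\calO(S)$ values of $a$, each requiring $\calO(1)$ lookups. Hence each entry costs $\calO(S)$ to fill, and the total time is $\calO(t \cdot S^3)$. The post-processing minimization over final balances adds only $\calO(S^2)$, which is absorbed.

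\textbf{Main obstacle.} The only nontrivial step is the postponement argument that justifies restricting rebalancing to the moments immediately preceding an acceptance; without it, the recurrence would need to also consider rebalancing between arbitrary transactions, blowing up the state. The argument is a standard exchange: if a strategy rebalances and then rejects (or does nothing) before any further acceptance, deleting that rebalancing event is still feasible and strictly cheaper, contradicting optimality. The bound $S \le \min(\sum_i x_i,\, (\sum_i Rx_i + f_2) - f_1)$ stated in the excerpt then ensures $S$ is polynomial in the input, so $\calO(tS^3)$ is a meaningful complexity bound.
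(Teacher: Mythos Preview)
Your proposal is correct and follows essentially the same approach as the paper: correctness by exhaustiveness of the three decision cases (with the postponement observation) and running time via $\calO(tS^2)$ DP entries each requiring $\calO(S)$ work for the $A_3$ minimization. Your argument is considerably more careful than the paper's, which dispatches correctness in a single sentence and does not separately handle the degenerate never-open-the-channel case, but the underlying structure is identical.
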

\begin{proof}
  In the dynamic programming, we take into account all possible decisions \on can make, by this, correctness follows.

  The algorithm \on tries all possible amounts between $1$ and $S$ and starting distributions.
  There are $S^2$ of them.
  While computing one value, it needs to look at at most $S$ precomputed values.
  And it needs to do it at most $t$ times.
\end{proof}

Using dynamic programming for calculating \off has two advantages.
First, we can easily recover the decisions of \off.
Secondly, dynamic programming provides us with optimum solution for all subsequences of $X_t$. This is useful for implementing \Cref{Algorithm:logC-competitive-main}.

\end{document}